\title{Asymmetry of the Kolmogorov complexity of online predicting odd and even bits}
\titlerunning{Asymmetry of online Kolmogorov complexity} 
\author{Bruno Bauwens}
\affil{Universit\'e de Lorraine, LORIA\\ %This research was supported by NAFIT ANR-08-EMER-008-01 project.
  615, Rue du Jardin Botanique, France\\
    \texttt{Brbauwens at gmail dot com}
  }
\authorrunning{B. Bauwens} %mandatory. First: Use abbreviated first/middle names. Second (only in severe cases): Use first author plus 'et. al.'
\subjclass{E.4} 
\keywords{(On-line) Kolmogorov complexity, (On-line) Algorithmic Probability, Philosophy of
 Causality, Information Transfer}
\newcommand{\ovl}[1]{\overline{#1}}
\newcommand{\rar}{\rightarrow}
\newcommand{\subsEv}{\textup{ev}}
\newcommand{\subsOdd}{\textup{odd}}
\newcommand{\Cev}{C_{\subsEv}}
\newcommand{\Codd}{C_{\subsOdd}}
\newcommand{\pev}{P_{\subsEv}}
\newcommand{\podd}{P_{\subsOdd}}
\newcommand{\qev}{Q_{\subsEv}}
\newcommand{\qodd}{Q_{\subsOdd}}
\newcommand{\pevn}[1]{P_{\subsEv,#1}}
\newcommand{\poddn}[1]{P_{\subsOdd,#1}}
\newcommand{\mev}{M_{\subsEv}}
\newcommand{\modd}{M_{\subsOdd}}
\newcommand{\cat}{^\frown}
\newcommand{\norm}[2]{\left\|#2\right\|_{\mathsmaller{#1}}}
\newcommand{\nio}[1]{\norm{1\infty}{#1}}
\newcommand{\noi}[1]{\norm{\infty1}{#1}}
\newcommand{\ntwo}[1]{\norm{2}{#1}}
\newcommand{\none}[1]{\norm{1}{#1}}
\newcommand{\ninf}[1]{\norm{\infty}{#1}}
\newcommand{\ufirst}{u_{\mathsmaller{-}}}
\newcommand{\ulast}{u_{\mathsmaller{+}}}
\newtheorem{proposition}[theorem]{Proposition}
\newtheorem*{proposition*}{Proposition}
\newcommand{\tikzfrac}[2]{^{#1}\!/\!_{#2}}
\tikzstyle{level 1}=[level distance=1cm, sibling distance=1.2cm]
\tikzstyle{level 2}=[level distance=1cm, sibling distance=0.7cm]
\newcommand{\treeGenerate}{
  child { child child }
  child { child child }
}
\newcommand{\treeLabel}[8]{
   \node[anchor=south] at (#1-2-2) {#2};
   \node[anchor=south] at (#1-2-1) {#3};
   \node[anchor=south] at (#1-1-2) {#4};
   \node[anchor=south] at (#1-1-1) {#5};
   \node[anchor=east]  at (#1-2)   {#6};
   \node[anchor=west]  at (#1-1)   {#7};
   \node[anchor=north] at (#1)     {#8};
}
\begin{document}

\maketitle  

\begin{abstract}
  Symmetry of information states that $C(x) + C(y|x) = C(x,y) + O(\log C(x))$. 
  In~\cite{onlineComplexity} an online variant of Kolmogorov complexity is introduced and we show
  that a similar relation does not hold.
  Let the even (online Kolmogorov) complexity of an $n$-bitstring $x_1x_2\dots x_n$ be the 
  length of a shortest program that computes $x_2$ on input $x_1$, computes $x_4$ on input $x_1x_2x_3$, etc; 
  and similar for odd complexity.
  We show that for all $n$ there exists an $n$-bit $x$
  such that both odd and even complexity are almost as large as the Kolmogorov complexity of
  the whole string.
  Moreover, flipping odd and even bits to obtain a sequence $x_2x_1x_4x_3\dots$,
  decreases the sum of odd and even complexity to $C(x)$.
  Our result is related to the problem of inferrence of causality in timeseries.
\end{abstract}

\section{Introduction}
\label{sec:intro}

Imagine two people want to perform a two-person theater play. First suppose that the 
play consists of only two independent monologues each one performed by one player.
Before performing, the players must memorize their part of the play, and the total 
studying effort for the two players together can be assumed to be equal to the effort 
for one person to study the whole script. 

Now imagine a play consisting of a large 
dialogue where both players alternate lines. 
Each player only needs to study their half of the lines, and it is sufficient 
to remember each line only after hearing the last lines of the other player.
Thus each player needs only to remember their incremental amount of information 
in his lines, and this suggests the total studying effort might be close to the 
effort for one person to study the whole script.

However, it often happens that after studying only 
his own lines, an actor can reproduce the whole piece.
Sometimes actors just study the whole piece.
This suggests that studying each half of the lines can be as hard as studying everything. 
In other words, the total effort of both players together might be close to twice the
effort of studying the full manuscript.

\medskip
Can we interpret this example in terms of Shannon information theory? In the first
case, let a theater play be modeled by a probability density function $P(X,Y)$ where $X$ and $Y$ 
represent the two monologues. Symmetry of information states that $H(X) + H(Y|X) = H(X,Y)$, 
i.e. the information in the first part plus the new information in the second part equals the total
information. This equality is exact and can be extended to the interactive case where 
a similar additivity property remains valid, and this contrasts to the story above.

\medskip
An absolute measure of information in a string is given by its Kolmogorov complexity, 
which is the minimal length of a program on a universal Turing machine that prints the string.
See section~\ref{sec:defs} for formal definitions.
Symmetry of information for Kolmogorov complexity 
holds within logarithmic terms~\cite{ZvonkinLevin,BauwensAdditivity}: $C(x) + C(y|x) = C(x,y) + O(\log C(x,y))$.

For the interactive case, we need the online variant of Kolmogorov complexity introduced
in~\cite{onlineComplexity}. Let $\Cev(x)$ denote the length of a shortest program that 
computes $x_2$ on input $x_1$, computes $x_4$ on input $x_1x_2x_3$, etc.; and similar for $\Codd(x)$.
In the above example all $x_i$ with odd $i$ correspond to lines for the first player and the others
to the second.

In Theorem~\ref{th:onlineDouble}, we show that there exist infinitely many bitstrings $x$, such that both $\Cev(x)$ and
$\Codd(x)$ are almost as big as $C(x)$, in agreement with our example.  In Theorem~\ref{th:onlineMain},
we show that there exists $c>0$ such that $(\Cev + \Codd - C)(x) \ge c|x|$, i.e. the online asymmetry of
information can be large compared to the length of $x$.
Finally, we raise the question how large $(\Cev + \Codd - C)(x)$ can be in terms of $|x|$.
A more direct upper bound is $|x|/2 + O(1)$, and one can raise the question whether this is tight.
We show there exists a smaller one:
there exists $c > 0$ such that $(\Cev + \Codd - C)(x) \le (1/2 - c)|x|$ for all large $x$. 

\medskip

Our main result is stronger and is related to the problem of defining
causality in time series.
%Suppose an adaptive system is used to perform an
%iterative randomized task (for example consider the brain of a reading person). % who coordinates a hand to follow
%%a randomly moving projected point). 
%Furthermore assume that in the corresponding computational process, two different regions 
%$\mathcal X$ and $\mathcal Y$ are involved from which we obtain measurements 
Imagine there exists 
a complex system (e.g. a brain) and we make some measurements in two parts of it.
The measurements are represented by bitstrings $x$ (from some part $X$ of the brain)
and $y$ (from some part $Y$). We perform these measurements regularly and get a sequence of 
pairs 
\[
 (x_1, y_1), (x_2, y_2), \dots
\]
We assume that both parts are communicating with each other, however, the time resolution 
is not enough to decide whether $y_i$ is a reply to $x_i$ or vice versa.
However, we might compare the {\em dialogue complexity} $\Codd + \Cev$ of 
\[
 x_1, y_1, x_2, y_2, \dots
\]
and
\[
 y_1, x_1, y_2, x_2, \dots
\]
and (following Occam's Razor principle) choose an ordering that makes the dialogue complexity
minimal. We show that these complexities can differ substantially. 

\medskip
Questions of causality are often raised in neurology and economics. The notions of Granger causality
and information transfer reflect the idea of ``influence'' and our result implies a theoretical notion of
asymmetry of influence that does not need to assume a time delay to ``transport'' information 
between~$X$ and $Y$ in contrast to existing definitions~\cite{granger,geweke,Sch2000,overview2013}.\footnote{
  In the case of three or more
  timeseries there exist algorithms that infer directed information flows between some variables in
  some special cases where enough conditional independence exist among the variables, see~\cite[p. 19--20, 50]{Pearl}.
  In our example no independence is assumed.
  }

To understand why (current) practical algorithms need a time delay to make inferences about the direction of influence, 
consider two variables $X,Y$ with a joint probability density function $P(X,Y)$.
Using Shannon entropy, we can quantify the 
influence of $X$ upon $Y$ as $I(Y;X) = H(Y) - H(Y|X)$. Symmetry of information directly implies that 
this equals the influence of $Y$ upon $X$: $H(X) - H(X|Y) = H(X) + H(Y) - H(X,Y)$. 
In the online setting, mutual information is replaced by  
information transfer, which
is well studied in the engineering literature~\cite{braintransfer1,Sch2000,palus,rosenblum,Wint2005,overview2013,abdul2013mutual}.
For time delays $k$ and $l>k$ the information transfer from $\mathcal X$ to $\mathcal Y$ is given by 
\[
H(Y_{n} | Y_{n-l},\dots, Y_{n-1}) - H(Y_{n} | Y_{n-l},\dots, Y_{n-1}, X_{n-l},\dots X_{n-k}) \,,
 \]
(if this term is dependent on $n$, the sum is taken).
This quantification of causality coincides with Granger causality~\cite{granger,geweke} 
if all involved conditional distributions are Gaussian.

If we incorporate a time delay $k \ge 1$, the information transfers from $\mathcal X$ to $\mathcal Y$ 
and $\mathcal Y$ to $\mathcal X$ can be different.
%%This notion is used especially in economics and
%neurology to identify the main direction of flow of information in interactive processes. Moreover,
On the other hand, for $k=0$ they are always equal,
and this is a corollary of (the conditional version of) symmetry of information. 
In the offline case, a similar observation holds for algorithmic mutual information:
$C(x) - C(x|y) = C(y) - C(y|x) + O(\log C(x,y))$.\footnote{
  However, logarithmic deviations can appear, 
  if one considers prefix complexity, for example if $y$ is chosen to be a string consisting of $K(x)$ zeros. In this case, 
  it is known that for each $n$ there exist $n$-bit $x$ such that $K(K(x)) - K(K(x)|x) \le O(1)$ 
  while $K(x) - K(x|K(x)) \ge \log n - O(\log \log n)$. Moreover, this small error was exploited in
  an earlier and more involved proof of  Theorem~\ref{th:onlineMain}~\cite{BauwensPhd}.}
In the online setting, algorithmic mutual information can be generalized to algorithmic information
transfer. For an $n$-bit $x$ and $y$ the version without time delay is given by
\[
  IT(x \rar y) = C(y) - \Cev(x_1y_1\dots x_ny_n)\,.
  \]
We show that for all $\epsilon > 0$ there are infinitely
many pairs $(x,y)$ with $|x|=|y|$ and $C(x,y) \ge \Omega(|x|)$ 
such that $IT(x \rar y) \le \epsilon C(x,y)$ while $IT(y \rar x)$
exceeds $C(x,y) + O(1)$.
Hence, in contrast to Shannon information theory, 
significant online dependence of $x_i$ on $y_i$ might not imply
significant online dependence of $y_i$ on $x_i$.

%The second part of Theorem~\ref{th:onlineDouble}  implies 
%that the dialogue complexity $(\Codd + \Cev)(x)$ heavily depends on the 
%order in which the dialogue $x$ is presented. We remind the example from the introduction. 
%Imagine there exists 
%a complex system (e.g. a brain) and we make some measurements in two parts of it.
%The measurements are represented by bitstrings $x$ (from some part $X$ of the brain)
%and $y$ (from some part $Y$). We perform these measurements regularly and get a sequence of 
%pairs 
%\[
% (x_1, y_1), (x_2, y_2), \dots
%\]
%We assume that both parts are communicating with each other, however, the time resolution 
%is not enough to decide whether $y_i$ is a reply to $x_i$ or vice versa.
%However, we might compare the dialogue complexity of 
%\[
% x_1, y_1, x_2, y_2, \dots
%\]
%and
%\[
% y_1, x_1, y_2, x_2, \dots
%\]
%and (following Occam's Razor principle) choose an ordering that makes the dialogue complexity
%minimal.

\medskip
Warning: 
%Occam's Razor principle represents a general belief 
%and it might not be useful in some specific interpretations and contexts.
%But there are more specific objections: 
The example where influence (and causality) is asymmetric heavily uses 
that shortest models are not computable. 
%(They are partially computable. 
%Corresponding semimeasures are non-computable 
%but lower semicomputable, see further in Lemma~\ref{lem:computableDecomp}).
Decompression algorithms used in practice are always total (or can be extended to total ones).
On the other hand, if one wants to be practical, it is natural to not only consider total 
algorithms but algorithms that terminate within some reasonable time bound (say polynomial).
On that level non-symmetry may reappear, even for one pair of messages, 
which was not possible in our setting. 
For example suppose $x_1$ represents a pair of large primes and $y_1$ represents their product, then 
it is much easier to produce first $x_1$ and then $y_1$ then vice versa.

\medskip
Muchnik paradox is a result about online randomness~\cite{MuchnikOnline} 
that is related to our observations.
Consider the example from~\cite{onlineComplexity}:
in a tournament (say chess), a coin toss decides 
which player starts the next game. 
Consider the sequence $b_1, w_1, b_2, w_2,\dots $ of coin tosses and winners of subsequent games. 
This sequence might not be random (the winner might depend on who starts), 
but we would be surprised if the coin tossing depends on previous winners.
%Muchnik's paradox states that there exist sequences that are not Martin-L\"of random, 
%but for which both odd and even bits are online random. 

More precisely, a sequence is Martin-L\"of random if no lower semicomputable martingale 
succeeds on it. To define  randomness for even bits, 
we consider martingales that only bet on even bits,
i.e. a martingale $F$ satisfies $F(x0) = F(x1)$ if $|x0|$ is odd. 
The even bits of $\omega$ are {\em online random} if no lower semicomputable martingale succeeds that 
only bets on even bits. 
(In our example, coin tosses $b_i$ are unfair if a betting scheme
makes us win on $b_1w_1b_2w_2\dots$ 
while keeping the capital constant for ``bets'' on~$w_i$.)
In a similar way randomness for odd bits is defined.
Muchnik showed that there exists a non-random sequence for which both
odd and even bits are online random. 
Hence, contributed information by the odd and even bits does not ``add up''.
Muchnik's paradox does not hold for the online version of computable randomness (where martingales
are restricted to computable ones), 
and is an artefact of the non-computability of the considered martingales. 
%In a similar way, the asymmetry of online complexity results from the 
%non-computability of the constructed compression functions and would disappear if 
%we only considered computable compression functions 
%(or computable measures see Lemma~\ref{lem:computableDecomp} below). 

\medskip

The article is organised as follows: the next section presents definitions and results.
The subsequent three sections are devoted to the proofs:
%%%%%  *Arxiv* version vs stacs version
first theorems are reformulated using online semimeasures,
then lower bounds are proven, and finally (in Appendix  section~\ref{sec:upperbound}) the upper
bound is proven. 
In the next appendix we generalize  Theorem~\ref{th:onlineDouble} for online computation with more
machines and present a more involved proof of the main result 
with slightly better parameters. In the last appendix we generalize symmetry of information to a 
chain rule for online Kolmogorov complexity.
%%%%%  Arxiv version vs *stacs* version
%first theorems are reformulated using online semimeasures,
%and then lower bounds are proven. In the full version of the paper, which is available on 
%ArXiv, there are four appendices containing:
%a proof of the chain rule for online complexity, 
%the generalization of Theorem~\ref{th:onlineDouble} for online computation with more machines,
%a version of  Theorem~\ref{th:onlineMain} with a larger linear constant,
%and a full proof of the upper bound (Theorem~\ref{th:upperbound}). 

\section{Definitions and results}
\label{sec:defs}

Kolmogorov complexity of a string $x$ on an optimal machine $U$ is the minimal 
length of a program that computes $x$ and halts. 
More precisely, associate with a Turing machine a function $U$ that maps pairs of strings to
strings. The conditional Kolmogorov complexity is given by
\[
C_U(x|y) = \min \left\{ |p|: U(p,y) = x \right\} \,.
\]
This definition depends on $U$, but there exist a class of machines for which 
$C_U(x|y)$ is minimal within an additive constant for all $x$ and $y$. 
We fix such an optimal $U$, and drop this index, see~\cite{LiVitanyi,GacsNotes} for details. 
If $y$ is the empty string, we write $C(x)$ in stead of $C(x|y)$, and the 
complexity of a pair $C(x,y|z)$ is given by applying an injective computable pairing function 
to $x$ and $y$.

The {\em even (online Kolmogorov) complexity}~\cite{onlineComplexity} of a string $z$ is
%is the minimal length of a program that computes $x_2$
%on input $x_1$, computes $x_4$ on input $x_1x_2x_3$, \dots, computes $x_i$ on input $x_1\dots
%x_{i-1}$ for all $i \le n$; or more formal let $U$ be an optimal plain machine and
%let $|x|$ denote the length of $x$, then
\[
\Cev(z) = \min \left\{ |p|: U(p,z_1\dots z_{i-1}) = z_i \textup{ for all } i=2,4,\dots, \le|z| \right\}.
  \] 
Again, there exists a class of optimal machines $U$ for which $\Cev$ is minimal within an additive
constant and we assume that $U$ is such a machine.
% and the dialogue complexty in a 
%dialogue $x$ is given by the sum of odd and even complexity.
%Now again one raises the question whether communicating $x$ ``in pieces'' can always happen 
%by at most $C(x)$ bits, and if not, can an effective permutation of $x$ remove this
%excess?\footnote{Combining two programs in the definition of $\Codd$ and $\Cev$ one can compute 
%an extention of $x$, but not necesserily the length $|x|$ of $x$. On te other hand, a program in the
%definition of $C(x)$ always computes $|x|$.} For this reason it would be better to compare total
%information transfer with decision %complexity~\cite{
Note that $C(x|y) - O(1) \le \Cev(y_1x_1\dots y_nx_n) \le C(x) + O(1)$ for $n$-bit~$x$
and~$y$.
Let $\Cev(w|v)$ be the conditional variant. The chain rule 
for the concatenation $vw$ of strings $v$ and $w$ holds: $\Cev(vw) = \Cev(v) + \Cev(w|v) + O(\log (|v|))$, 
see Appendix~\ref{sec:chainrule}.
In a similar way~$\Codd(x)$ is defined. % and let
%$\Codd(z) + \Cev(z)$ be the {\em dialogue complexity}. It represents
%the minimal amount of information that needs to pass through a communication channel 
%such that both sources can reproduce the dialogue $z$.
A direct lower and upper bound for $\Codd + \Cev$ are\footnote{
  The $O(\log |x|)$ term could be decreased to $O(1)$ if we compared online complexity with
  decision complexity~\cite{ShenRelations} as in~\cite{onlineComplexity}. 
  However, plain and decision complexity differ by at most $O(\log |x|)$, and because we focus on
  linear bounds, we do not use this rare variant of complexity.}
\[
  C(z) - O(\log |z|) \le (\Codd+\Cev)(z) \le 2C(z) + O(1)\,.
  \]
  The lower bound is almost tight, for example if all even bits of $z$ are zero.
Surprisingly, the upper bound can also be almost tight and $\Codd + \Cev$ can change significantly after a
simple permutation of the bits.
%there exist bitwise dialogues such that both sources contribute an amount of information 
%close to the total information in the dialogue. However, if odd bits were communicated one
%timestep later (i.e. after the next even bit), then these odd bits contain no new information: 

\begin{theorem}\label{th:onlineDouble}
  For every $\varepsilon>0$ there exist $\delta>0$ and a sequence $\omega$ such that for large $n$
  \[
  \begin{array}{r} \Codd(\omega_1\dots\omega_n) \\
        \Cev(\omega_1\dots\omega_n) \end{array} 
  \ge  (1-\varepsilon)C(\omega_1\dots\omega_n) + \delta n \,.
  \]
  Moreover, for all even $n$
  \begin{eqnarray}
   \Codd(\omega_2\omega_1\dots\omega_{n}\omega_{n-1}) & = & C(\omega_1\dots\omega_{n}) 
   + O(\log n) \label{eq:double1} \\
   \Cev(\omega_2\omega_1\dots\omega_{n}\omega_{n-1}) & \le & O(1)\,.\label{eq:double2} 
 \end{eqnarray}
\end{theorem}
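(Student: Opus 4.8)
The plan is to construct $\omega$ so that odd bits are "fresh" random information at each step while even bits are determined by a structural relation to the preceding block. Concretely, I would build $\omega$ in blocks of length $2$: for the $i$-th block $\omega_{2i-1}\omega_{2i}$, let the odd bit $\omega_{2i-1}$ be drawn so that the whole string stays Kolmogorov-random (roughly, $C(\omega_1\dots\omega_{2i-1})\approx 2i-1$), and let the even bit $\omega_{2i}$ be some fixed computable function of the prefix $\omega_1\dots\omega_{2i-1}$ — the simplest choice being $\omega_{2i}=\omega_{2i-1}$ (or a parity of a few preceding bits). Then $\Cev(\omega_2\omega_1\dots\omega_n\omega_{n-1})=O(1)$ is immediate, since in the flipped string the bits in even positions are exactly the $\omega_{2i-1}$, each of which equals the bit just before it, so a constant-size program outputs "copy the previous bit"; this gives \eqref{eq:double2}. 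For \eqref{eq:double1}: in the flipped string $\omega_2\omega_1\omega_4\omega_3\dots$ the odd positions carry $\omega_2,\omega_4,\dots$, i.e. the even bits of the original, and a predictor for them, given all earlier flipped bits, is equivalent to a program that on input $\omega_1\dots\omega_{2i-1}$ outputs $\omega_{2i}$ — but that is computable, so $\Codd$ of the flipped string should be $O(\log n)$...

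Wait — that would make \eqref{eq:double1} say $C(\omega_1\dots\omega_n)=O(\log n)$, which is false for a random $\omega$. So the dependence must go the other way: the even bit of the original must be the \emph{fresh} information and the odd bit must be \emph{computable} from the even bits seen so far. I would therefore instead set $\omega_{2i-1}$ (odd, first player's line) to be determined by the previously revealed material and $\omega_{2i}$ (even) to be fresh and random. Then predicting even bits online — $\Cev$ on the \emph{original} string — requires on input $\omega_1\dots\omega_{2i-1}$ to output the fresh bit $\omega_{2i}$, so $\Cev(\omega_1\dots\omega_n)$ should morally encode all the even bits, i.e. be $\approx n/2 \approx C(\omega_1\dots\omega_n)(1-\varepsilon)+\delta n$ once we arrange $C(\omega_1\dots\omega_n)\approx n/2$ too. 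And $\Codd(\omega_1\dots\omega_n)$: predicting odd bits, i.e. on input $\omega_1\dots\omega_{2i-2}$ output $\omega_{2i-1}$; if $\omega_{2i-1}$ is a function of $\omega_2,\omega_4,\dots,\omega_{2i-2}$ (the earlier fresh bits), this is computable and $\Codd=O(\log n)$ — again too weak for the "double" conclusion. The genuine construction must make \emph{both} halves hard, which, as the introduction stresses, exploits non-totality: $\omega_{2i-1}$ should be chosen so that it is determined by the prefix only via a \emph{non-total} process, e.g. $\omega_{2i-1}$ is the $i$-th bit of the halting-set-like object relative to the random even bits, so that predicting odd bits forces you to have already "committed" to the even bits.

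Given the authors say the clean proof reformulates everything with online semimeasures, the real plan I would follow is: (1) translate the theorem into a statement about an online \emph{semimeasure} (algorithmic probability) $M$ that must assign high "even-mass" and high "odd-mass" simultaneously; (2) build $\omega$ by a fixpoint / diagonalization so that, relative to the lower-semicomputable online semimeasure, the predictor of even bits is forced onto a set of measure $2^{-(1-\varepsilon)C(\omega_1\dots\omega_n)-\delta n}$, and symmetrically for odd — here I would use the standard trick that there is a string that is "complex for" every enumerated candidate predictor, obtained by a priority argument over all online machines; (3) for the moreover part, arrange the block structure so that in $\omega_2\omega_1\omega_4\omega_3\dots$ the even bits become trivially copyable (constant online complexity) while the odd bits of the permuted string reconstitute $\omega_1\dots\omega_n$ up to a computable bijection, whence the $C(\omega_1\dots\omega_n)+O(\log n)$ equality by a counting/compression argument both ways.

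The main obstacle is step (2): making \emph{both} $\Codd$ and $\Cev$ of the \emph{unpermuted} string nearly maximal. Each is individually easy to blow up (put all the information in "your half"), but forcing the two online predictors to \emph{simultaneously} fail to compress is where the non-computability of shortest programs must be used — intuitively, $\omega$ has to be set up so that the only short description of the even bits given the odd prefix is itself as long as a description of the even bits given nothing, and vice versa, which I expect requires a careful mutual diagonalization against all pairs of online machines together with a reserve of fresh randomness inserted at a controlled rate $\delta n$. The permutation identities \eqref{eq:double1}–\eqref{eq:double2} I expect to be comparatively routine bookkeeping once the block structure from step (2) is fixed.
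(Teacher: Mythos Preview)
Your high-level plan matches the paper's: reduce to online semimeasures via the coding theorem, then diagonalize against the universal odd and even semimeasures $\modd,\mev$. But the concrete construction you need for step~(2) is missing, and the mechanism you sketch for \eqref{eq:double1}--\eqref{eq:double2} is not the right one.

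The core device you are missing is a two-bit \emph{bait-and-switch game}. At block $n$, having built $\omega_{1\dots 2n}$ and maintained thresholds $o_n\ge \qodd(\omega_{1\dots 2n})$, $e_n\ge \qev(\omega_{1\dots 2n})$, you place a tiny amount of mass $\delta$ at the leaf $00$ and wait. If Bob is to beat $P^{2-2\varepsilon}$ at that leaf, one of $\qodd(\omega_{1\dots 2n}0)$ or $\qev(\omega_{1\dots 2n}00)$ must exceed $\delta^{1-\varepsilon}$ times its threshold. Whichever semimeasure crosses first has now \emph{committed} mass on that side; you then set $\omega_{2n+1}\omega_{2n+2}$ to $11$ or $01$ so as to move to a branch where that committed semimeasure is forced below $(1-2\delta)^{2-2\varepsilon}$ of its threshold. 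If neither crosses, set $\omega_{2n+1}\omega_{2n+2}=00$ and both thresholds drop by $\delta^{1-\varepsilon}$. Iterating gives $(\qodd\qev)(\omega_{1\dots 2n})\le (1-\delta)^n P(\omega_{1\dots 2n})^{2-2\varepsilon}$ for a lower-semicomputable $P$, and the theorem follows from the coding theorem with $\qodd=\modd$, $\qev=\mev$. There is no priority argument over ``all pairs of online machines'': a single diagonalization against the universal pair suffices.

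Your attempts at \eqref{eq:double2} via ``copy the previous bit'' or ``odd bit computable from earlier even bits'' cannot work --- as you observed, any forward-in-time deterministic dependence collapses one of $\Codd,\Cev$. The paper's trick is that the dependence runs \emph{backward within a block}: $\omega_{2n+2}$ is set to $1$ exactly when a threshold-crossing event occurred, and in that case one can compute \emph{which} event, hence $\omega_{2n+1}$, from $\omega_{1\dots 2n}$. Thus there is a partial computable $F$ with $F(\omega_{1\dots 2n},\omega_{2n+2})=\omega_{2n+1}$. This is useless for predicting $\omega_{2n+1}$ in the original order (you don't yet have $\omega_{2n+2}$), so it does not hurt $\Codd$; but in the flipped string $\omega_2\omega_1\omega_4\omega_3\dots$ the bit at even position $2k$ is $\omega_{2k-1}$, and the preceding prefix already contains $\omega_{1\dots 2k-2}$ and $\omega_{2k}$, so a fixed program calling $F$ gives $\Cev(\text{flipped})\le O(1)$. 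Equation~\eqref{eq:double1} then follows from $\Codd\le C+O(1)$ and $C\le \Codd+\Cev+O(\log n)$ applied to the flipped string.
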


The first part implies 
\[
  \limsup_{|x|\rar\infty} \frac{\Codd(x) + \Cev(x)}{C(x)} \ge 2 \;,
\]
and by the upper bound $\Codd, \Cev \le C + O(1)$, this supremum equals $2$.
Recall the definition $IT(x \rar y) = C(y) - \Cev(x_1y_1\dots x_ny_n)$ for $x,y,n$ such that
$n=|x|=|y|$. 
Let $x = \omega_1\omega_3\dots\omega_{2n-1}$ and $y = \omega_2\omega_4\dots\omega_{2n}$,
Theorem~\ref{th:onlineDouble} implies
\begin{eqnarray*}
  IT(x \rar y) & \le & \varepsilon C(x,y) + O(1) \\
  IT(y \rar x) & = & C(x,y) + O(1) \,,
\end{eqnarray*}
(where $C(x,y) \ge \delta n - O(1)$).\footnote{
  For the first we use $C(y)
  \le C(\omega_{1\dots 2n}) = C(x,y)$ up to $O(1)$ terms. For the second $C(x,y) \ge C(x) \ge
  \Cev(y_1x_1\dots y_nx_n) = C(x,y)$,
  thus $C(x) = C(x,y)$, while $\Cev(y_1x_1\dots y_nx_n) \le O(1)$. Also,
  note that $C(\omega_{1\dots 2n})$ must exceed $\delta n$ because it exceeds
  $\Codd(\omega_{1\dots 2n}) \ge \delta n $, all up to $O(1)$ terms.}
  
Theorem~\ref{th:onlineDouble} can be generalized to dialogues between $k \ge 2$ machines, i.e. 
if $k$ sources need to perform a dialogue, it can happen that each source must contain
almost full information about the dialogue. Moreover, if the order is changed, the
``contribution'' of all except one source becomes computable.
Let the complexity of bits $i \bmod k$ be given by
  \[
  C_{i \bmod k}(x) = \min \left\{ |p|: U(p,x_1\dots x_{j-1}) = x_j \text{ for all } j = i, i+k, \dots, \le|x| \right\}.
  \]
For every $k$ and $\varepsilon>0$ there exist a $\delta > 0$ and a sequence $\omega$ such that for all $i \le k$ 
and large $n$
\[
  C_{i \bmod k}(\omega_1\dots \omega_{n})  \ge  (1-\varepsilon)C(\omega_1\dots\omega_{n}) + \delta n
  \]
  Moreover, for 
$\tilde{\omega} = \omega_k\omega_1\dots\omega_{k-1} \,\omega_{2k}\omega_{k+1}\dots\omega_{2k-1}\, \dots$
  for all $n$, and $i = 2\dots k$: 
\begin{eqnarray*}
  C_{1 \bmod k}(\tilde{\omega}_1\dots \tilde{\omega}_{n}) & =  & C(\omega_1\dots\omega_{n}) + O(\log n) \\
  C_{i \bmod k}(\tilde{\omega}_1\dots \tilde{\omega}_{n}) & \le & O(1)\,.
\end{eqnarray*}

\bigskip
In Theorem~\ref{th:onlineDouble} the difference between $C$ and $\Codd+\Cev$
is linear in the length of the prefix of $\omega$. One might wonder how big this difference can be.
A direct bound is $|x|/2 + O(1)$. Indeed, the odd complexity 
of $x$ is at most $C(x)$ hence
\[
\left( \Codd + \Cev\right)(x) - C(x) = ( \Codd(x) - C(x)) +
\Cev(x) \le O(1) + |x|/2 + O(1) \,.
\]
The next theorem shows that the difference can indeed be $c|x|$ for a significant~$c$.

\begin{theorem}\label{th:onlineMain}
  There exist a sequence $\omega$ such that for all $n$
  \[
  (\Codd + \Cev)(\omega_1\dots \omega_n)
  \ge n(\log \tfrac{4}{3})/2 + C(\omega_1\dots \omega_n) - O(\log n)\,.
  \]
  Moreover, Equations~\eqref{eq:double1} and~\eqref{eq:double2} are satisfied. 
\end{theorem}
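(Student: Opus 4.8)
The plan is to follow the route announced in the introduction: first pass to online semimeasures, then build $\omega$ by a diagonalisation against the universal online predictors. Write $\Cev(z)=-\log\mev(z)+O(\log|z|)$, $\Codd(z)=-\log\modd(z)+O(\log|z|)$ and $C(z)=-\log\mathbf m(z)+O(\log|z|)$, where $\mev$ (resp.\ $\modd$) is the largest lower semicomputable \emph{online} semimeasure for the even (resp.\ odd) positions -- sub-summing at a position that is ``predicted'' and merely non-increasing at a position that is ``received for free'' -- and $\mathbf m$ is the plain a priori probability. In these terms the first inequality of the theorem is equivalent to producing $\omega$ with
\[
\modd(\omega_1\dots\omega_n)\,\mev(\omega_1\dots\omega_n)\ \le\ (3/4)^{n/2}\,\mathbf m(\omega_1\dots\omega_n)\,\mathrm{poly}(n)\qquad(\text{all }n),
\]
i.e.\ the product of the two online probabilities must lose a factor $3/4$ per pair of bits relative to the offline one. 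I also note that Equation~\eqref{eq:double1} is not an extra requirement: once~\eqref{eq:double2} holds, the direct lower bound $(\Codd+\Cev)(z)\ge C(z)-O(\log|z|)$ applied to $z=\omega_2\omega_1\dots\omega_n\omega_{n-1}$ forces $\Codd(z)\ge C(z)-O(\log n)$, and combining with $\Codd(z)\le C(z)+O(1)$ and $C(z)=C(\omega_1\dots\omega_n)+O(\log n)$ yields~\eqref{eq:double1}. So it suffices to build $\omega$ satisfying~\eqref{eq:double2} and the displayed bound.

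Next, the shape of the construction. View $\omega$ as the interleaving $a_1b_1a_2b_2\dots$ with $a_i=\omega_{2i-1}$, $b_i=\omega_{2i}$. The bits $b_1b_2\dots$ carry the incompressibility that makes the $(3/4)^{n/2}$ factor a genuine gain (so $C(\omega_1\dots\omega_n)=\Theta(n)$), while each $a_i$ is the value of one \emph{fixed} partial computable function, $a_i=\Phi(b_1a_1\dots b_{i-1}a_{i-1}b_i)$; this is exactly~\eqref{eq:double2}. The essential freedom is that $\Phi$ may \emph{diverge} on arguments that are not prefixes of the actual $\omega_2\omega_1\dots$\,; the definition of $\Phi$, and the choice of the $b_i$, are made together by a strategy that dovetails the lower approximations of $\modd$ and $\mev$ and, round by round, fixes the new bit(s) and decides on which continuation $b_i\in\{0,1\}$ the function $\Phi$ stays silent, so as to steer the pair into the region where the currently believed value of $\modd\cdot\mev$ is as small as possible. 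The point of the divergence is that a lower semicomputable predictor may shift probability onto a value of $a_i$ (or of $b_i$) only after seeing $\Sigma_1$ evidence for it; withholding convergence of $\Phi$ on the off-path continuation keeps either predictor from ever collecting such evidence for more than a $3/4$-fraction of the mass it would like to commit. This is the step that uses non-computability of shortest programs, and it is exactly what lets us beat the naive ``free-bit'' bound $\Codd+\Cev\le C+n/2$, under which the gain would be $0$.

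The quantitative heart is then a finite lemma: for a single pair -- or a short block of a constant number of pairs -- and for \emph{any} behaviour of the approximating sequences of $\modd$ and $\mev$, the strategy can fix the new $b$ and the convergence pattern of $\Phi$ so that the incremental factor contributed to $\modd\cdot\mev$ is at most $3/4$ times the incremental factor contributed to $\mathbf m$. Iterating over the $n/2$ pairs gives the displayed inequality, hence $(\Codd+\Cev)(\omega_1\dots\omega_n)\ge n(\log\tfrac{8}{3})/2-O(\log n)$, which (since $C(\omega_1\dots\omega_n)=n/2+O(\log n)$) is $n(\log\tfrac{4}{3})/2+C(\omega_1\dots\omega_n)-O(\log n)$, as wanted. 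It then remains to tidy up: check that $\Phi$ is total on the actual path, so~\eqref{eq:double2} really holds; check that the strategy never ``gets stuck'', so the sequence $b_1b_2\dots$ it outputs is well defined and genuinely incompressible (one may run the strategy feeding the undetermined bits from an arbitrary Martin-L\"of random $\beta$ and argue success off a null set, then fix one such $\beta$); resolve the non-uniformity hidden in the $O(1)$ of~\eqref{eq:double2} by pigeonhole over the finitely many short programs; and absorb into $O(\log n)$ the coding-theorem error, the cost of locating block boundaries, and the passage between plain and online complexity.

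The main obstacle is precisely this one-block lemma: a \emph{single} $\Phi$ must push down \emph{both} $\modd$ and $\mev$, even though the two predictors see a pair in opposite orders -- $\modd$ must commit to $a_i$ before $b_i$ is revealed, whereas $\mev$ sees $a_i$ and only then commits to $b_i$ -- and the informal principle ``silence on the off-path branch blocks all $\Sigma_1$ credit'' has to be made exact despite the fact that $\modd$ and $\mev$ keep increasing: one must guarantee that the path has already advanced past pair $i$ by the time either predictor could react, and obtain an honest accounting of exactly how much mass each predictor is nonetheless allowed to place correctly. This is also the place where a more careful analysis buys the slightly better constant of the appendix.
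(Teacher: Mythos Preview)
Your scaffolding matches the paper: pass to online semimeasures via the coding theorem, note that~\eqref{eq:double1} follows from~\eqref{eq:double2} together with the general bound $(\Codd+\Cev)(z)\ge C(z)-O(\log|z|)$, and realise~\eqref{eq:double2} by a partial computable $\Phi$ that recovers $\omega_{2i-1}$ from the history and $\omega_{2i}$. The displayed target $\modd\cdot\mev\le(3/4)^{n/2}M\cdot\mathrm{poly}$ is exactly what is needed.

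The gap is in how you reach it. You propose to force $C(\omega_1\dots\omega_n)=n/2+O(\log n)$ by feeding the $b_i$ from a Martin--L\"of random $\beta$, and then separately bound $\modd\mev$ by an absolute $(3/8)^{n/2}$. This is internally inconsistent with your own description: if ``the strategy can fix the new $b$'' adaptively against the current approximations of $\modd$ and $\mev$---and it must, since that reaction is where any nontrivial constant comes from---then $b_1b_2\dots$ is determined by the dovetailing and cannot also be the bits of an external random $\beta$. Conversely, if the $b_i$ are fixed as $\beta_i$ in advance, your only freedom is the convergence pattern of $\Phi$, and nothing in the proposal argues that this alone yields $3/4$ per pair; the ``silence on the off-path branch blocks $\Sigma_1$ credit'' heuristic is not the mechanism and does not give a number.

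What the paper does instead is dispense with $C(\omega)$ altogether: it constructs, alongside $\omega$, a \emph{specific} lower semicomputable semimeasure $P$ with $(\modd\mev)(\omega_{1\dots2n})\le(3/4)^nP(\omega_{1\dots2n})$, and then invokes $P\le O(M)$ by universality. The one-pair ingredient is a concrete trap rather than an appeal to incompressibility: put weight $1/4$ on $x00$ and wait; to cover it the opponent must push one of $\modd(x0)$, $\mev(x00)$ above $1/2$, and at that moment escape to $x11$ or $x01$ with weight $1/2$. The total spent at that level is $1/4+1/2=3/4$ (this is the constant), while the relevant online factor on the escape leaf is below $1/2$. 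Iterating with running thresholds $o_n,e_n$ (halve the one that overshot, or both if neither did) and setting $P(x)=(4/3)^{|x|/2}o_xe_x$ gives a genuine semimeasure and the inductive bound; the function $F$ is exactly ``$\omega_{2n+2}$ tells whether a threshold was crossed, and re-running the enumeration tells which one, hence $\omega_{2n+1}$''. Your proposal names the right obstacle but does not supply this trap or the auxiliary semimeasure $P$, which is the missing idea.
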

In the appendix we show how the factor $(\log \tfrac{4}{3})/2$ can further be improved to $(\log
\tfrac{3}{2})/2 \approx 0.292$ at the cost of weakening~\eqref{eq:double1} and~\eqref{eq:double2}.
On the other hand, the upper bound $1/2$ can not be reached:
%We do not know whether this difference $n (\log \frac{3}{2})/2$ is maximal. 
%However, we show that it is at most
%$n(\tfrac{1}{2}-\varepsilon)$ for some $\varepsilon > 0$. 

\begin{theorem}\label{th:upperbound}
  There exist $\beta < \tfrac{1}{2}$ such that for large~$x$
  \[
  \left( \Cev + \Codd - C\right)(x) \le \beta |x|\,.
  \]
\end{theorem}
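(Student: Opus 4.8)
The plan is to show that one of the two online complexities must be noticeably smaller than its trivial bound whenever the other is large. Concretely, I would establish two competing upper bounds on $\Cev(x) + \Codd(x)$ and argue that at no single string can both be tight. The first bound is the trivial $\Cev(x) \le C(x) + O(1)$ together with $\Codd(x) \le |x|/2 + O(1)$ (predict each odd bit by brute guessing), which gives $(\Cev + \Codd - C)(x) \le |x|/2 + O(1)$; symmetrically with the roles swapped. The second, sharper bound should come from a \emph{shared program} that does reasonably well on both the odd and the even halves simultaneously. The idea is that a shortest program $p$ witnessing $C(x) = |p|$ already encodes all of $x$; from $p$ one can build an online predictor for the even bits that is correct everywhere, of length $|p| + O(\log|x|)$, and \emph{also} an online predictor for the odd bits, again of length $|p| + O(\log|x|)$ — but these two predictors share the common ``core'' $p$, so if we are willing to predict only a $\tfrac12 + \gamma$ fraction of the bits on one side correctly from a program that also helps on the other side, we can amortise.

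The technical device I would use is a \emph{list-decoding / combinatorial counting} argument in the style of the reformulation via online semimeasures announced in the excerpt (Section on ``first theorems are reformulated using online semimeasures''). Fix $x$ of length $n=2m$. Consider the online semimeasure $\mev$ on even bits: $\Cev(x) \approx -\log \mev(x)$ up to logarithmic terms, and likewise $\Codd(x) \approx -\log \modd(x)$. The trivial product semimeasure has $\mev(x)\modd(x) \ge 2^{-2C(x)-O(1)}$, giving the factor $1/2$. To beat it, I would build a single semimeasure $Q$ on \emph{all} $n$ bits with $Q(x) \ge 2^{-C(x)-O(1)}$ and observe that from $Q$ one can extract online semimeasures $Q_{\subsEv}$ and $Q_{\subsOdd}$ whose product is strictly larger than the naive $2^{-2C(x)}$ on a constant fraction of strings: the gain is that the ``even predictor'' derived from $Q$ is already conditioning on the true odd bits, and vice versa, so the two are not independent. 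Quantitatively, for each prefix the conditional distributions of the next even bit and the next odd bit under $Q$ are correlated with the actual string in a way that a counting argument (how many strings can simultaneously have both halves hard relative to $C(x)$) forces $-\log\mev(x) - \log\modd(x) - (-\log Q(x)) \le (1-2c)\cdot\tfrac{n}{2}$ for some fixed $c>0$, hence $\beta = \tfrac12 - c < \tfrac12$.

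The main obstacle, and where I expect the real work to be, is making the ``the two predictors share a core'' intuition into a rigorous inequality with an \emph{honest constant}: a priori the shortest program for $x$ could be structured so that knowing the odd bits gives no help at all in predicting the even bits online (indeed Theorems~\ref{th:onlineDouble} and~\ref{th:onlineMain} show this genuinely happens for a linear amount), so the argument must extract the constant gain from the remaining slack — essentially from the fact that you cannot \emph{simultaneously} have $\Cev$ near $C$ \emph{and} $\Codd$ near $|x|/2 + C$, because the first condition already pins down too much of $x$ from the even-side program. I would handle this by a case split on the relative sizes of $\Cev(x) - C(x)$ and $|x|/2$: if $\Cev(x) \le C(x) + (\tfrac12 - c)|x|$ we are done by the trivial bound on $\Codd$; otherwise $\Cev(x)$ is large, meaning the even bits are online-hard given the odd ones, and I would argue by a volume/counting estimate that this hardness ``uses up'' description length that then makes $C(x)$ itself large enough — i.e. $C(x) \ge \Cev(x) - O(\log)$ forces the combination $\Cev + \Codd - C \le \Codd \le |x|/2$ to lose a constant fraction because $\Codd(x)$, being a prediction of the odd bits, cannot be worst-case $|x|/2$ when the odd bits are already heavily constrained by being part of a string whose even-half is maximally informative. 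Balancing the two cases yields the single $\beta < \tfrac12$.
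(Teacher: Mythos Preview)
Your proposal has a genuine gap: the case split in your final paragraph is vacuous. You write ``if $\Cev(x) \le C(x) + (\tfrac12 - c)|x|$ we are done by the trivial bound on $\Codd$'', but $\Cev(x) \le C(x) + O(1)$ holds \emph{always}, so for large $|x|$ your Case~1 hypothesis is automatically satisfied and Case~2 never occurs. Even reading the split charitably (say, splitting on whether $\Codd(x) \le (\tfrac12 - c)|x|$), the substantive case is the one where \emph{both} $\Cev(x) \ge C(x) - o(|x|)$ and $\Codd(x) \ge (\tfrac12 - o(1))|x|$, and there your argument reduces to the sentence ``$\Codd(x)$ \dots\ cannot be worst-case $|x|/2$ when the odd bits are already heavily constrained by being part of a string whose even-half is maximally informative''. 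That intuition is exactly what Theorems~\ref{th:onlineDouble} and~\ref{th:onlineMain} show \emph{can} happen for a linear amount, so it cannot be invoked as a free step; the whole content of the theorem is to quantify the remaining slack, and your ``volume/counting estimate'' is never specified. No counting over strings will work directly, because the adversarial semimeasures $\modd,\mev$ are not product-like and the bad set of strings is not a simple combinatorial object.

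The paper's proof is entirely different in character. It does not argue by case analysis on $x$ at all. Instead it proves a factorisation statement (Proposition~\ref{prop:upperbound}): for \emph{every} lower semicomputable semimeasure $Q$ there exist online semimeasures $\poddn{n},\pevn{n}$ with $\poddn{|x|}(x)\pevn{|x|}(x) \ge \alpha^{|x|}Q(x)/4$ for some explicit $\alpha > \sqrt{1/2}$. Taking $Q = M$ and applying the coding theorem gives the result. The construction of $\poddn{n},\pevn{n}$ from $Q$ is explicit and inductive in $n$: one represents $\sqrt{Q}$ as a $2^n$-vector, introduces two recursively defined norms $\nio{\cdot}$ and $\noi{\cdot}$ that encode the odd/even semimeasure constraints, and builds vectors $o(u),p(u)$ satisfying $\noi{o(u)} \le \ntwo{u}$, $\nio{p(u)} \le \ntwo{u}$, and $o(u)p(u) \ge (1/\sqrt{2} + \varepsilon)^n u^2$. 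The base case ($n=1$) is where the constant is won: one treats the two children differently according to whether $u$ is ``balanced'' ($u_0 \approx u_1$) or ``unbalanced'', letting $o$ carry the excess in the balanced regime and $p$ in the unbalanced regime, and verifying four explicit inequalities that pin down a workable $\varepsilon > 0$. The induction step then lifts this $2$-dimensional gadget to $2^n$ dimensions via the recursive norms. Nothing resembling a counting argument or a case split on $\Cev(x)$ versus $C(x)$ appears; the proof is a direct, quantitative construction on the semimeasure side.
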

%\noindent In fact we can choose $\beta = 0.491$\,. 
\noindent
In summary, 
$
\tfrac{1}{2}\log \tfrac{3}{2} \le \lim \sup \frac{\left( \Cev + \Codd - C\right)(x)}{|x|} < \tfrac{1}{2}\,,
$
but the precise value of the $\limsup$ is unknown.

\section{Online semimeasures}

We show that the problem of constructing strings where additivity of online complexity 
is violated is equivalent to constructing lower semicomputable semimeasures that 
can not be factorized into ``odd'' and ``even'' online lower semicomputable semimeasures.
Before defining such semimeasures and reformulating Theorems~\ref{th:onlineDouble}--\ref{th:upperbound}, 
we recall the algorithmic coding theorem.

A (continuous) semimeasure $P$ is a function from strings to $[0,1]$ 
such that $P(x0) + P(x1) \le P(x)$ for all $x$. 
A real function $f$ on strings is lower semicomputable if the set of all pairs $(x,r)$ of strings
and rational numbers such that $f(x) \le r$ is enumerable.
There exist a maximal lower semicomputable semimeasure $M(x)$, i.e. a lower semicomputable that exceeds any other 
such semimeasures within a constant factor: $M(x) = \sum_i 2^{-i} P_i(x)$ 
for an enumeration $P_1$, $P_2$, \dots\, of all such semimeasures
(see \cite{GacsNotes,LiVitanyi,bookShenVereshchagin} for details).
The coding theorem~\cite[Theorem 4.3.4]{LiVitanyi} implies 
\[
 \log 1/M(x) = C(x) + O(\log C(x))\,.
\]

\noindent
An {\em even (online) semimeasure}~\cite{onlineComplexity} is a function from strings to $[0,1]$
such that for all~$x$
\begin{enumerate}[$i.$]
  \item $P(x0) + P(x1) \le P(x)$ if $|x0|$ is even, 
  \item $P(x0) = P(x1) = P(x)$ otherwise.%\footnote{
%    The relation with conditional semimeasures is 
%    as follows: every even semimeasure $P$ defines a conditional semimeasure $Q$ 
%    for $|x| \le |y|$ defined as 
%    $Q(x|y) = P(x_1y_1\dots x_{|x|}y_{|x|})$ and vice versa.
%}
\end{enumerate}
The coding theorem generalizes to the online setting.
\begin{theorem}[\cite{onlineComplexity}]\label{th:onlineCoding}
 There exist maximal even (respectively odd) semimeasures. All such semimeasures $\mev$ (resp. $\modd$) satisfy
 \[
  \log 1/\mev(x) = \Cev(x) + O\left(\log \Cev(x)\right).
 \]
\end{theorem}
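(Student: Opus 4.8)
The plan is to treat the two parts of the statement separately; the odd case is symmetric, and since any two maximal even semimeasures agree up to a multiplicative constant it suffices to fix one maximal $\mev$. Existence is obtained as usual: enumerate all lower semicomputable functions from strings to $[0,1]$ and trim each to the largest lower semicomputable even semimeasure below it. The one subtlety relative to ordinary semimeasures is that condition $(ii)$ is an \emph{equality}, so in the trimming an odd-length node must copy the value of its even-length parent while an even-length node is capped below by $f$ at itself and at both its children; this is still carried out by the standard delayed approximation. Since every lower semicomputable even semimeasure equals its own trimming, the weighted sum $\mev := \sum_i 2^{-i}\tilde P_i$ over the trimmed enumeration is again a lower semicomputable even semimeasure and dominates each $\tilde P_i$ within a constant factor, hence is maximal.

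For the inequality $\log 1/\mev(x) \le \Cev(x) + O(\log \Cev(x))$ I would manufacture a lower semicomputable even semimeasure from the optimal online machine $U$. For a program $p$, let $P_p(x) = 2^{-|p|}$ if $U(p,x_1\dots x_{i-1})$ halts with output $x_i$ for every even $i \le |x|$, and $P_p(x)=0$ otherwise. One checks directly that $P_p$ is an even semimeasure whose value at the empty string is $2^{-|p|}$: a new bit at an odd position neither imposes nor relaxes a constraint, so the values stay equal there, while a new bit at an even position can be consistent with at most one of the two one-bit extensions, so the two children sum to at most the parent. Because programs are not prefix-free I would weight $p$ by $w(|p|)$ with $w(k) \asymp (k+1)^{-2}$, so that $\sum_p w(|p|) 2^{-|p|} \le 1$, and set $P = \sum_p w(|p|) P_p$; this is a lower semicomputable even semimeasure with $P(x) \ge w(\Cev(x))\,2^{-\Cev(x)}$. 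Maximality gives $\mev(x) \ge \Omega(P(x))$, and taking logarithms yields the bound, the logarithmic term being exactly the $\log 1/w(\Cev(x))$ created by the lack of prefix-freeness (this is the term the footnote removes by switching to decision complexity).

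For the converse $\Cev(x) \le \log 1/\mev(x) + O(\log \Cev(x))$ the crucial observation is that fixing the odd bits $o = o_1 o_2 \dots$ turns $\mev$ into an ordinary lower semicomputable semimeasure $e \mapsto \mev^o(e) := \mev(o_1 e_1 o_2 e_2 \dots)$ on strings of even bits, uniformly in $o$, with $\mev^o(e_1\dots e_j)$ depending only on $o_1\dots o_j$. Hence an online decoder that has read a prefix of length $2j-1$ already knows which node $e_1 \dots e_{j-1}$ of this tree it sits at and, using the just-revealed $o_j$, can compute the split of $\mev^o$ at that node; a Shannon--Fano / arithmetic decoding of $e(x)$ against $\mev^o$ then lets a program of length $\log 1/\mev^o(e(x)) + O(1) = \log 1/\mev(x) + O(1)$ drive the decoder down the tree, so that it outputs the correct even bit at every step. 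Packaging this decoder as an online machine and invoking optimality of $U$ finishes the proof. I expect the main obstacle to be the lower semicomputability of $\mev$: the intervals used in the decoding grow and shift as the approximation improves, so the decoder's choices must be made in a monotone, delay-tolerant way — the standard but fiddly bookkeeping already present in the offline coding theorem — and this is the one place where a logarithmic slack can creep in; everything else is the routine dictionary between online semimeasures and online machines.
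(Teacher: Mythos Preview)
The paper does not prove this theorem: it is stated with a citation to~\cite{onlineComplexity} and then used as a black box in the reductions that follow. There is therefore no proof in the paper to compare your proposal against.

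On its own merits your sketch is the standard argument and is correct in outline. The trimming remark for existence (odd-length nodes copy their even-length parent, even-length nodes are capped by the candidate at themselves and at both children) is exactly the point that distinguishes even semimeasures from ordinary ones. For $\log 1/\mev(x)\le\Cev(x)+O(\log\Cev(x))$ your $P=\sum_p w(|p|)P_p$ is the right object: each $P_p$ is an even semimeasure because at an even position $U(p,\cdot)$ selects at most one child, and the weight $w(k)\asymp(k+1)^{-2}$ absorbs the non-prefix-freeness and produces precisely the $O(\log\Cev(x))$ overhead you identify. For the converse, ``freeze the odd bits and decode the even bits by arithmetic coding against the induced ordinary semimeasure $\mev^o$'' is again the expected route; the definition of $\Cev$ already has built-in monotonicity (a program good for $x$ is good for every prefix), so a Levin-style monotone decoder---allocating disjoint subintervals as the approximation to $\mev^o$ grows---yields the online program directly. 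The only genuinely online feature is that the split at depth~$j$ uses the just-revealed $o_j$, which is harmless since, as you note, $\mev^o(e_1\dots e_j)$ depends only on $o_1,\dots,o_j$. The fiddly bookkeeping you anticipate is exactly the offline one, transplanted.
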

\noindent
Let $\omega_{k\dots l} = \omega_k \dots \omega_l$.
Theorems~\ref{th:onlineDouble}, \ref{th:onlineMain} and \ref{th:upperbound} follow from

\begin{proposition}\label{prop:onlineDouble}
  For all $\varepsilon > 0$ and lower semicomputable odd and even online semimeasures $\qodd$ and $\qev$,  
  there exist $\delta$, a sequence $\omega$, a lower semicomputable semimeasure~$P$, 
  and a partial computable~$F$ such that for all~$n$
  \[
  (\qodd\qev)(\omega_{1\dots n}) 
  \le (1-\delta)^n P(\omega_{1\dots n})^{2-2\varepsilon}
  \]
  and $F(\omega_{1\dots2n},\omega_{2n+2}) = \omega_{2n + 1}$.
\end{proposition}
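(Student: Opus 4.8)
\medskip
\noindent\emph{Proof plan.}
The first inequality is equivalent to asking that the potential $\Phi_n:=\log\tfrac1{(\qodd\qev)(\omega_{1\dots n})}-(2-2\varepsilon)\log\tfrac1{P(\omega_{1\dots n})}$ grow along $\omega$ at least linearly, with slope $\log\tfrac1{1-\delta}$. Observe first that $\qodd\qev$ is itself a lower semicomputable continuous semimeasure: at an odd-indexed bit $\qev$ is constant while $\qodd$ contracts, and at an even-indexed bit the roles are exchanged, so $(\qodd\qev)(x0)+(\qodd\qev)(x1)\le(\qodd\qev)(x)$ in both cases. Hence it is enough to build $\omega$, $P$ and $F$ together, two bits at a time, so that each pair of bits raises $\Phi$ by a fixed positive amount while keeping every odd bit a trivial function of the pair it lies in.

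\medskip
\noindent Process $\omega$ in pairs, of two alternating types (asymptotically half of each). At a \emph{type-$\mathrm o$} step from an even-length prefix $u$, append the odd bit $\omega_{2k+1}$ equal to the light child of the fork $\qodd(u0)+\qodd(u1)\le\qodd(u)$, set $\omega_{2k+2}:=\omega_{2k+1}$, and declare $F(\omega_{1\dots2k},c):=c$; at a \emph{type-$\mathrm e$} step, fix $\omega_{2k+1}:=0$, append the even bit $\omega_{2k+2}$ equal to the light child of the fork $\qev(u00)+\qev(u01)\le\qev(u0)=\qev(u)$, and declare $F(\omega_{1\dots2k},c):=0$. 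Then $F$ is total computable with $F(\omega_{1\dots2n},\omega_{2n+2})=\omega_{2n+1}$ for all $n$, and $\qodd\qev$ contracts by at least a factor $\tfrac12$ over each pair. This is also exactly the structure that makes the even complexity of the odd/even-flipped sequence be $O(1)$, and, when $\qodd=\modd$ and $\qev=\mev$, the alternation is what keeps \emph{both} online complexities large: at type-$\mathrm o$ steps $\modd$ is forced to contract while $\mev$ can recover the even bit from $F$, and at type-$\mathrm e$ steps $\mev$ is forced to contract while $\modd$ can recover the odd bit from $F$.

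\medskip
\noindent It remains to define $P$, which is the crux. Since $\qodd\qev$ contracts by $\le\tfrac12$ per pair-step, the required inequality follows once $P(\omega_{1\dots2k})\ge\bigl(2(1-\delta)^2\bigr)^{-k/(2-2\varepsilon)}$, i.e. $P$ may lose only a fixed factor $\gamma<1$ per pair-step along $\omega$; as $\omega$ carries just one genuine bit of choice per pair-step and $2-2\varepsilon$ may be close to $2$, $P$ must in effect predict that bit with log-loss well below one. This is possible because the bit is, in each case, the light child of a lower semicomputable fork of $\qodd$ (resp. $\qev$): when a fork is far from balanced its heavy child is eventually witnessed and $P$ lower-semicomputably ramps up its weight on the light child; when a fork is near-balanced the semimeasure $\qodd\qev$ already contracts there by almost a full bit, so $\omega$ may instead be routed to whichever child $P$ has favoured — a balanced fork does not constrain the routing at all. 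The delicate point, and the main obstacle, is that $P$ cannot retract weight once committed, whereas the light child is known only in the limit; I will handle it by allowing at most one re-routing per pair-step (default child $0$, switched to $1$ only once a lower bound certifies $0$ is the heavy child, a then-permanent decision) and by feeding $P$ through a geometric ``savings'' schedule, so that the total weight ever stranded on abandoned children below any fixed depth is $O(1)$; this surviving constant is reabsorbed by taking $\delta$ slightly smaller, and a parallel computation with $1-\delta$ in place of $(1-\delta)^2$ covers the odd-length cases. Summing the per-pair gains then yields $(\qodd\qev)(\omega_{1\dots n})\le(1-\delta)^nP(\omega_{1\dots n})^{2-2\varepsilon}$ for all $n$.
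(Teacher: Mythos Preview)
The routing of $\omega$ and the construction of $F$ are fine, and so is the observation that $\qodd\qev$ loses at least one bit per pair under a default/switch scheme. The gap is in $P$.

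After your reduction you need $P(\omega_{1\dots2k})\ge\gamma^k$ with $\gamma=(2(1-\delta)^2)^{-1/(2-2\varepsilon)}$, and for small $\varepsilon$ this forces $\gamma>\tfrac12$. Look at a single pair-step with current mass $p$. The routing bit defaults to $0$ and may irrevocably switch to $1$ when a c.e.\ event fires. If it fires, $P$ must end with at least $\gamma p$ on child $1$, so its prior commitment to child $0$ is at most $(1-\gamma)p<p/2$. If it never fires, $P$ must end with at least $\gamma p>p/2$ on child $0$. But $P$'s commitment to child $0$ is non-decreasing and bounded by $(1-\gamma)p$ at every finite stage, hence also in the limit --- a per-step contradiction. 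A ``geometric savings schedule'' cannot dissolve this: the constraint is local to each step, not a global statement about total stranded mass. Your escape clause (route $\omega$ to whichever child $P$ favoured when the fork is near-balanced) does not help either, since ``near-balanced'' is itself a limit property, and any finite-stage surrogate recreates the same dichotomy. In effect you have fixed the contraction of $\qodd\qev$ to $\tfrac12$ per pair, independently of $\varepsilon$, and then asked $P$ to predict a $\Sigma^0_1$ bit with probability strictly above $\tfrac12$; a lower semicomputable semimeasure cannot do that.

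The paper avoids exactly this by tying $P$'s default commitment to the contraction of $\qodd\qev$ rather than fixing the latter at $\tfrac12$. It uses a tiny threshold $o_n\delta^{1-\varepsilon}$ (resp.\ $e_n\delta^{1-\varepsilon}$), with $\delta=2^{-2/\varepsilon}$ so that $\delta^{1-\varepsilon}=4\delta$: the default leaf $00$ receives only mass $\delta$ from $P$, and if a threshold is crossed $P$ puts $1-2\delta$ on $01$ or $11$. If neither threshold is ever crossed, both $\qodd$ and $\qev$ stay below $\delta^{1-\varepsilon}$ times their current bounds, so their product contracts by $\delta^{2-2\varepsilon}$ --- which matches $P$'s tiny mass $\delta$ \emph{after} raising to the power $2-2\varepsilon$. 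The exponent $2-2\varepsilon$ is precisely what makes the small default commitment sufficient; in your scheme the exponent is never used on the $\qodd\qev$ side, which is why the budget for $P$ cannot close.
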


\begin{proposition}\label{prop:3_4th}
  For all lower semicomputable odd and even online semimeasures $\qodd$ and $\qev$,  
  there exist a sequence $\omega$, a lower semicomputable semimeasure~$P$, 
  and a partial computable~$F$ such that for all~$n$
  \[
    (\qodd\qev)(\omega_{1\dots 2n}) \le  (3/4)^{n} P(\omega_{1\dots 2n})
  \]
  and $F(\omega_{1\dots2n},\omega_{2n+2}) = \omega_{2n + 1}$.
\end{proposition}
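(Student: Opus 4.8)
The plan is to build the lower semicomputable semimeasure $P$ and the partial computable function $F$ by one algorithm that runs the enumerations of $\qodd$ and $\qev$; the sequence $\omega$ is read off in the limit (so it may be non-computable, which is harmless). Everything is constructed level by level on the binary tree, maintaining along $\omega$ the invariant $(\qodd\qev)(\omega_{1\dots 2n})\le(3/4)^n P(\omega_{1\dots 2n})$. Writing $\Phi(v)=(\qodd\qev)(v)/P(v)$, it suffices that at each already committed node $v=\omega_{1\dots 2n}$ the algorithm extends $\omega$ by two bits so that $\Phi$ is multiplied by at most $3/4$; since $\Phi$ of the empty string is at most $1$, induction then yields the statement with this $P$.

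The local step rests on the following bookkeeping at a node $v$ of even length $2n$. Among the two levels below $v$, $\qodd$ bets only at the odd position $2n{+}1$ and $\qev$ only at the even position $2n{+}2$, so $\qodd(vij)=\qodd(vi)$, $\qev(vi)=\qev(v)$, $\qodd(v0)+\qodd(v1)\le\qodd(v)$ and $\qev(vi0)+\qev(vi1)\le\qev(vi)$ for $i,j\in\{0,1\}$. Hence the four grandchildren $vij$ carry weights $A_{ij}:=(\qodd\qev)(vij)=\qodd(vi)\qev(vij)$ with $\sum_{ij}A_{ij}\le(\qodd\qev)(v)$. Fixing the two bits $F(v,0),F(v,1)$ leaves exactly two grandchildren admissible, namely those $vij$ with $i=F(v,j)$, and these two always differ in the bit $j$; so the free bit $\omega_{2n+2}$ then selects whichever of the two is wanted, while $\omega_{2n+1}$ is forced to equal $F(v,\omega_{2n+2})$ — which is exactly the required relation $F(\omega_{1\dots 2n},\omega_{2n+2})=\omega_{2n+1}$. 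With perfect information one would use $F$ and $\omega_{2n+2}$ to steer the path onto the grandchild with the smallest $A_{ij}$, which is at most $(\qodd\qev)(v)/4$, and let $P$ place all of $P(v)$ onto it, multiplying $\Phi$ by at most $1/4$; the available factor $3/4$ thus leaves a slack of $3$ to pay for the absence of perfect information.

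That loss is genuine: $F$ must be computed at a finite stage while the approximations to the $A_{ij}$ only grow, so the two grandchildren made admissible by $F$ may in the limit carry together almost all of the weight $(\qodd\qev)(v)$, and even the split of that weight between them is revealed only gradually, so $P$ cannot simply follow the path. I would allocate $P$'s mass down the tree lazily — legitimate because $P$ is only lower semicomputable, so allocations may be raised later but never lowered — and reserve at each node enough of $P(v)$ so that, once the enumeration reveals which admissible grandchild the path takes, the $P$-share of that grandchild can be boosted to at least $\tfrac23 P(v)$. In the bad case, where $F$ committed to the heavier admissible pair, the lighter of the two admissible grandchildren still has weight at most $(\qodd\qev)(v)/2$, and a $\tfrac23$ share then multiplies $\Phi$ by at most $\tfrac12\cdot\tfrac32=\tfrac34$; in the good case an even split already does. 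The same reservation scheme is run simultaneously to absorb the analogous imperfection on the $\qev$-side.

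The main obstacle — and the technical heart of the argument — is to schedule all these corrections so that the mass handed down from each node $v$ never exceeds $P(v)$ (so that $P$ is a genuine semimeasure and still lower semicomputable), and so that the subroutine defining $F$ halts on the actual arguments $(\omega_{1\dots 2n},\omega_{2n+2})$; in particular one must control the case where the two admissible weights stay balanced, so that the argument of the minimum keeps oscillating and it is never safe to commit the boost at a finite stage. The bound that makes the bookkeeping close is that $\qodd(v0)+\qodd(v1)\le\qodd(v)\le1$, together with the matching inequality for $\qev$, which caps how much corrective weight can ever be demanded at $v$. Once this is in place the invariant above is the claimed inequality, with $P$ the constructed semimeasure, and the relation $F(\omega_{1\dots 2n},\omega_{2n+2})=\omega_{2n+1}$ holds by construction.
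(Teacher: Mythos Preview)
Your proposal sets up a reasonable framework but stops precisely at what you yourself call ``the technical heart of the argument'': you identify the scheduling of $P$'s mass and the oscillation problem as the main obstacles, then declare the invariant established ``once this is in place'' without actually putting it in place. More concretely, your plan of committing both $F(v,0),F(v,1)$ at a finite stage and then boosting the lighter admissible grandchild cannot be carried through by watching the products $A_{ij}$ alone. With the admissible pair fixed, the adversary can drive both admissible weights toward $S/2$ (where $S=(\qodd\qev)(v)$), and any threshold test for committing the reserve $\tfrac13 P(v)$ is caught between incompatible demands: for the ``crossing'' branch one needs the threshold at least $S/2$ so that the other side is small enough for a $\tfrac23$-share to work, while for the ``neither crosses'' branch one needs it at most $S/4$ so that the lone $\tfrac13$-share suffices. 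The underlying reason is that when $A_{00}$ grows large you must decide whether to escape to $v01$ or to $v11$, and that depends on whether the large factor is $\qev(v00)$ or $\qodd(v0)$ --- information the product does not carry.

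The paper's proof supplies exactly this missing idea. It tracks the two factors \emph{separately} via thresholds $o_n\ge\qodd(\omega_{1\dots2n})$ and $e_n\ge\qev(\omega_{1\dots2n})$, starts with a single tentative extension $v00$, and waits for one of the events $\qodd(v0)>o_n/2$ or $\qev(v00)>e_n/2$. These are comparisons against \emph{known} numbers, so no oscillation can arise. Only when one event fires is $F(v,1)$ committed (to $1$ or $0$ respectively) and the escape square chosen; if neither fires, $\omega$ stays at $v00$ and both factors remain at most half their thresholds. Setting $P(x)=(4/3)^{|x|/2}o_xe_x$ is then automatically a semimeasure, since the at most two branches ever occupied below $v$ contribute at most $(\tfrac14+\tfrac12)\,o_ne_n$, and the required inequality holds by construction.
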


\begin{proposition}\label{prop:upperbound}
  For all lower semicomputable semimeasures $Q$,
  there exist $\alpha > \sqrt{1/2}$
 and a family of odd and even semimeasures~$\poddn{n}$ and~$\pevn{n}$ uniformly lower-semicomputable in $n$, 
 such that for all~$x$ 
 \begin{equation}%\label{eq:goalup}
   \poddn{|x|}(x)\pevn{|x|}(x) \ge \alpha^{|x|}Q(x)/4\,.
 \end{equation}
\end{proposition}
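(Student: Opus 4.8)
The plan is to show that any lower semicomputable semimeasure $Q$ on $n$-bit strings can be "split" into an odd and an even online semimeasure whose product loses only a factor $\alpha^{n}$ with $\alpha$ strictly above $\sqrt{1/2}$. First I would reformulate the task combinatorially. Fix $n$; think of $Q$ (restricted and suitably rounded at a finite approximation stage, with the usual limiting argument to handle lower semicomputability) as a weight assignment to the $2^{n}$ leaves of the binary tree of depth $n$, with the semimeasure inequality giving $\sum_{x} Q(x) \le 1$. An even online semimeasure is determined by a family of "betting ratios" at even levels only (it is constant across odd-level splits), and likewise for odd. So $\poddn{n}(x)\pevn{n}(x)$ is a product over the levels of $x$ of factors, where at each level exactly one of the two semimeasures is allowed to "bet" (distribute its mass non-uniformly between the two children) and the other must split evenly, contributing a fixed factor $1/2$. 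The question becomes: given the target weights $Q(x)$, how well can the two players, each controlling half the levels, jointly approximate $Q$?

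The key step is a per-pair-of-levels analysis. Group the levels into consecutive pairs; in each pair the odd player bets on one level and the even player on the other (for the last level, if $n$ is odd, one extra single level is handled separately with at most a constant loss, absorbed into the $/4$). Over a pair of levels a string descends through a subtree with $4$ leaves, and the odd and even semimeasures together realize, on these $4$ leaves, a product of the form $(\text{bet at level }1)\times\tfrac12 \times \tfrac12 \times(\text{bet at level }2)$ — but crucially the "bet at level 1" value is shared among the two subtrees below it and the level-2 bet can differ in each of those two subtrees. A direct calculation shows that for any probability vector $(q_{00},q_{01},q_{10},q_{11})$ on the $4$ leaves, one can choose these bets so that the resulting product at each leaf is at least $\tfrac12\sqrt{q_{i\cdot}\,q_{\cdot j}}\ge \tfrac12 \cdot$ something; optimizing, the worst case is the uniform vector, where each leaf gets weight $1/4$ but the product can be made $\ge \tfrac14\cdot\tfrac{?}{}$. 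The point is that this worst-case ratio per pair of levels is a constant $\rho > 1/2$ (I expect $\rho = 1/2\cdot\sqrt{2}\cdot(\text{correction})$, in any case strictly larger than $1/2 = \sqrt{1/2}\cdot\sqrt{1/2}$, which is what the trivial split gives), so iterating over $n/2$ pairs yields $\alpha^{n}$ with $\alpha=\sqrt{\rho}>\sqrt{1/2}$.

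After the combinatorial bound I would address uniformity and semicomputability: $Q$ is lower semicomputable, so approximate it from below by rational $Q_t \nearrow Q$; for each $t$ define the finite betting strategy above from $Q_t$; the resulting $\poddn{n},\pevn{n}$ are then lower semicomputable uniformly in $n$ because the betting ratios are computed from the enumerated approximations, and one takes a suitable supremum/running-maximum to keep monotonicity in $t$ while staying a valid online semimeasure. One should also check that these genuinely satisfy conditions $i.$ and $ii.$ for even (resp. odd) semimeasures — i.e. constancy across the "wrong-parity" splits — which holds by construction since the non-betting player's factor is exactly the even split. The main obstacle I anticipate is pinning down the exact constant $\rho$: it requires solving a small min-max optimization over the $4$-leaf betting problem and verifying that the uniform distribution (or whatever the extremal $Q$ turns out to be) is genuinely the worst case, and that the two levels in a pair do not interact in a way that erodes the gain below $1/2$ when chained across many pairs. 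A secondary subtlety is the boundary handling for odd $n$ and the $/4$ slack, but that is routine.
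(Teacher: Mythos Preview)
Your plan has a genuine gap: it misidentifies where the difficulty lies. For a \emph{fixed} (or computable) $Q$, the per-pair-of-levels ``betting problem'' you describe has no loss at all. On a pair of levels with leaf weights $(q_{00},q_{01},q_{10},q_{11})$ summing to~$1$, the odd player sets $p_{x_1}=q_{x_10}+q_{x_11}$ and the even player sets $r_{x_1x_2}=q_{x_1x_2}/p_{x_1}$, giving $p_{x_1}r_{x_1x_2}=q_{x_1x_2}$ exactly; this is just Lemma~\ref{lem:computableDecomp}. So your ``worst case is the uniform vector with $\rho>1/2$'' is wrong: the static min--max you propose to solve has value~$1$, not some intermediate constant.

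The entire content of the proposition is in the step you wave away as routine: ``take a suitable supremum/running-maximum to keep monotonicity in $t$ while staying a valid online semimeasure.'' When $Q$ is merely lower semicomputable, it can first put mass on one leaf (forcing you to commit, say, $p_0$ close to~$1$), and later shift nearly all mass to the other branch, where you now have almost nothing left because $p_1\le 1-p_0$. Keeping both $\poddn{n}$ and $\pevn{n}$ non-decreasing \emph{and} valid semimeasures while tracking an adversarially growing $Q$ is a genuine game, and that game is what the paper analyzes. Indeed the paper remarks that Bob wins the two-bit version of this game for any $\alpha^2<2/3$, but explicitly says no induction scheme for those strategies is known; instead it introduces the recursive norms $\noi{\cdot},\nio{\cdot}$ and a history-dependent ``balanced/unbalanced'' response (equations~\eqref{eq:def_op1}--\eqref{eq:def_op2}) whose monotonicity and norm bounds are checked by hand. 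Your outline contains no analogue of this, and without it the argument does not go through.
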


\begin{proof}[Proof that Proposition~\ref{prop:upperbound} implies  Theorem~\ref{th:upperbound}.]
Choose $Q = M$ in  Proposition~\ref{prop:upperbound} and let for a sufficiently small $c>0$
\[
\podd(x) = c \left( \frac{1}{1^2} \poddn{1}(x) + \frac{1}{2^2} \poddn{2}(x) + \dots \right)\,.
\]
Note that $\podd$ is a lower semicomputable odd semimeasure and
by universality $\podd(x) \le O(\modd(x))$. Hence $-\log \modd(x) \le -\log
\poddn{|x|}(x) + O(\log |x|)$.  Similar for $\pev(x)$. 
By the online coding theorem we obtain up to terms $O(\log |x|)$,
\[
(\Codd + \Cev)(x) \le -\log \left( \poddn{|x|}(x)\pevn{|x|}(x) \right)  \le -|x|\log \alpha -\log Q(x) \,.
\]
Here, $-\log \alpha < 1/2$ and the last term 
is bounded by $-\log M(x) \le C(x) + O(\log |x|)$. The $O(\log |x|)$  can be removed for large $|x|$ 
by choosing $-\log \alpha < \beta < 1/2$. 
\end{proof}

\begin{proof}[Proof that Proposition~\ref{prop:3_4th} implies  Theorem~\ref{th:onlineMain}.]
  Choosing $\qodd = \modd$ and $\qev = \mev$, the first part is immediate 
  by the coding theorem and \eqref{eq:double2} follows directly from 
  the definition of even complexity. For any $x$ we have
  \[
   \Codd(x) - O(1) \le C(x) \le \Codd(x) + \Cev(x) + O(\log |x|)
  \]
  We obtain~\eqref{eq:double1} by applying $\Cev(x) \le O(1)$.
  \end{proof}

  \begin{proof}[Proof that Proposition~\ref{prop:onlineDouble} implies  Theorem~\ref{th:onlineDouble}.]
  For  Theorem~\ref{th:onlineDouble} we also apply  Proposition~\ref{prop:onlineDouble} with $\qodd =
  \modd$ and $\qev = \mev$ to obtain for some $\delta'>0$
  \[
  (\Codd + \Cev)(\omega_{1\dots 2n}) \ge (2 - 2\varepsilon)C(\omega_{1\dots 2n}) + \delta' n\,.
  \]
  Notice that $\Codd \le C + O(1)$, hence $\Cev(\omega_{1\dots 2n}) \ge (1-2\varepsilon) C(\omega_{1\dots
  2n}) + \delta' n$; and similar for $\Codd$. 
  Conditions  \eqref{eq:double1} and  \eqref{eq:double2} follow in a similar way as above.
\end{proof}
The generalization of Theorem~\ref{th:onlineDouble} mentioned in section~\ref{sec:defs} is shown in the appendix.
We remark that $P$ in these theorems can not be computable, 
this follows from the subsequent lemma.

\begin{lemma}\label{lem:computableDecomp}
  For every computable semimeasure $P$, 
  there exist computable odd and even online semimeasures 
  $\podd$ and $\pev$ such that $\podd\pev = P$.
\end{lemma}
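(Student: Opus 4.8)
The plan is to factorize $P$ through its own conditional probabilities, routing the odd conditioning steps into $\podd$ and the even ones into $\pev$. For a string $x$ and $0\le i\le|x|$ write $x[i]=x_1\cdots x_i$, so $x[0]$ is the empty string. I would set
\[
  \podd(x)=P(x[0])\prod_{\substack{1\le i\le|x|\\ i\ \mathrm{odd}}}\frac{P(x[i])}{P(x[i-1])},\qquad
  \pev(x)=\prod_{\substack{1\le i\le|x|\\ i\ \mathrm{even}}}\frac{P(x[i])}{P(x[i-1])},
\]
with the convention that a factor equals $0$ as soon as its denominator is $0$ (the numerator is $0$ as well then, since $P(x[i])\le P(x[i-1])$ for a semimeasure). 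On $\{x:P(x)>0\}$ this choice is forced: unwinding $\podd(x)\pev(x)=P(x)$ by induction on $|x|$ shows that, as long as the running product is still positive, the factor contributed at step $i$ must be $P(x[i])/P(x[i-1])$.

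The verification is then three short checks. First, $\podd$ is an odd online semimeasure: appending a bit at an even position adds no odd-indexed factor, so $\podd(x0)=\podd(x1)=\podd(x)$ there; appending a bit at an odd position multiplies in $P(x0)/P(x)$ resp.\ $P(x1)/P(x)$, so $\podd(x0)+\podd(x1)\le\podd(x)$ by $P(x0)+P(x1)\le P(x)$ (trivially if $P(x)=0$). The same argument with the parities swapped shows $\pev$ is an even online semimeasure. Second, the product telescopes: $\podd(x)\pev(x)=P(x[0])\prod_{i=1}^{|x|}P(x[i])/P(x[i-1])=P(x)$, and this remains correct (both sides $0$) once $P$ has vanished on a prefix of $x$. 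Third, along any initial segment of $x$ on which $P$ stays strictly positive, $\podd(x)$ and $\pev(x)$ are finite products of ratios of positive computable reals, hence computable; in particular the lemma is immediate when $P$ is everywhere positive, and when $P(x[0])=0$, i.e.\ $P\equiv0$, one simply takes $\podd\equiv0$ and $\pev$ the uniform online semimeasure.

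The one place that needs real care — the step I expect to be the main obstacle — is computability beyond a prefix on which $P$ vanishes, since $\{x:P(x)=0\}$ need not be decidable for an arbitrary computable semimeasure, while the product must be driven to $0$ there. The observation that makes this go through is that at the \emph{first} prefix $x[m]$ with $P(x[m])=0$, the conditional $P(x[m])/P(x[m-1])$ has a $0$ numerator over a \emph{positive} denominator, so it is a genuine computable real equal to $0$; this factor sits in $\podd$ when $m$ is odd and in $\pev$ when $m$ is even, and it forces that function's defining product to the value $0$ \emph{before} any $0/0$ is encountered, without our ever deciding membership in the zero set. Hence for every string at least one of the two products avoids a $0/0$, and if the first to reach a verdict does so with a strictly positive value then $P$ never vanishes along $x$ and the other product evaluates cleanly too. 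I would therefore compute the pair $(\podd(x),\pev(x))$ by running the two product formulas in parallel: whichever of the two first returns the value $0$ lets us output $(0,0)$ (correct, as $P(x)=0$ then), and if one returns a positive value $v$ the other is finished off (its partner may also be recovered as $P(x)/v$). Confirming that this dovetailed procedure is total, that it outputs correct rational approximations, and that the resulting functions really are online semimeasures is the bookkeeping I expect to be the delicate part; the degenerate scaling $0<P(x[0])<1$ is harmless, being absorbed into the leading factor $P(x[0])$ of $\podd$.
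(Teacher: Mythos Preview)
Your decomposition via conditional probabilities is exactly the paper's: the paper presents it as a two-level inductive step (its Figure~1), with $\podd$ picking up the first-level ratios $e/\gamma,f/\gamma$ and $\pev$ the second-level ratios $a/e,b/e,c/f,d/f$, which unwinds to your product formulas. The paper's short proof, like your first two paragraphs, is complete when $P$ never vanishes; the paper says nothing about the case $P(x)=0$, so your flagging of that case already goes beyond what the paper does.

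However, your proposed patch for the vanishing case does not work. The rule ``if one product first returns $0$, output $(0,0)$'' is problematic in two ways. First, for computable reals one can certify positivity but never certify equality with $0$, so ``returns the value $0$'' is not a terminating test; the dovetailing you describe is not obviously a total algorithm. Second, and decisively, setting \emph{both} coordinates to $0$ can violate the constancy clause of the factor that did \emph{not} vanish. Take a computable $P$ with $P(\varepsilon)=P(0)=1$ and $P(00)=0$. Your formulas give $\podd(0)=P(0)/P(\varepsilon)=1$, and an odd online semimeasure must satisfy $\podd(00)=\podd(0)=1$ since $|00|$ is even. But $\pev$'s single factor at $00$ is $P(00)/P(0)=0$, so under your race rule $\pev$ ``returns $0$'' and you output $\podd(00)=0\neq 1$. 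Your parenthetical ``correct, as $P(x)=0$ then'' checks only the product $\podd\pev=P$, not the individual semimeasure constraints on each factor. The symmetric failure occurs when the first zero sits at an odd index and you overwrite a positive $\pev$ value. A correct treatment would keep each factor at its formula value wherever that formula is still well-defined and only zero it at the \emph{next} step of its own parity; but making that uniformly computable without deciding ``$P(x)=0$?'' is precisely the difficulty, and it is not resolved by racing the two products.
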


\begin{proof}
  Let $\varepsilon$ be the empty string and let 
  $\podd(\varepsilon)=P(\varepsilon)$ and $\pev(\varepsilon)=1$.
  %(In fact, we can choose any factorization of $P(\varepsilon)$ in factors bounded by~$1$.)
  %The decomposition of $P(x)$ is defined iteratively: 
  Suppose that at some node $x$ we have defined $\podd(x)$ and $\pev(x)$ such 
  that $\podd(x)\pev(x) = P(x)$. Then $\podd$ and $\pev$ are defined
  on $2$-bit extensions of $x$ according to Figure~\ref{fig:computableDecomp} 
  for $\gamma = P(x)$ and $\alpha = \pev(x)$ 
  [our assumption implies $\podd(x) = \gamma/\alpha$]. Note that 
  $\podd$ and $\pev$ are indeed computable odd and even semimeasures
  and that $\podd\pev = P$.
\end{proof}

\begin{figure}
  \centering
  \begin{tikzpicture}[grow=up]
    \coordinate (root) \treeGenerate;
      \treeLabel{root}{$a$}{$b$}{$c$}{$d$}{$e$}{$f$}{$\gamma$}

    \node (eq) at (1.8,1) {\LARGE{$=$}};

    \coordinate[right=4.2 of root] (odd) \treeGenerate;
      \treeLabel{odd}{$e$}{$e$}{$f$}{$f$}{$e$}{$f$}{$\gamma$}
    \node  at ($(odd) +(-1.4,1)$) {\Large{$\tikzfrac{1}{\alpha} \cdot$}};
    \node (dot) at ($(odd) +(1.6,1)$) {\LARGE{$\cdot$}};

    \coordinate[right=3.5 of odd] (even) \treeGenerate;
    \treeLabel{even}{$\frac{a}{e}$}{$\frac{b}{e}$}{$\frac{c}{f}$}{$\frac{d}{f}$}{$1$}{$1$}{$1$}

    \node at ($(even) +(-1.4,1)$) {\Large{$\alpha\cdot$}};
  \end{tikzpicture}
  \caption{Decomposing semimeasures into odd and even ones.
    \label{fig:computableDecomp}
  }
\end{figure}
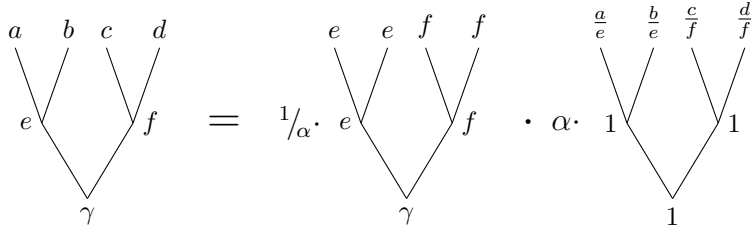

\section{Proofs of lower bounds}\label{sec:lowerbounds}

We start with Proposition~\ref{prop:3_4th}, and repeat it for convenience.

\begin{proposition*}
  For all lower semicomputable odd and even online semimeasures $\qodd$ and $\qev$,  
  there exist a sequence $\omega$, a lower semicomputable semimeasure~$P$, 
  and a partial computable~$F$ such that for all~$n$
  \[
    (\qodd\qev)(\omega_{1\dots 2n}) \le  (3/4)^{n} P(\omega_{1\dots 2n})
  \]
  and $F(\omega_{1\dots2n},\omega_{2n+2}) = \omega_{2n + 1}$.
\end{proposition*}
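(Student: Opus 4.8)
The plan is to build the sequence $\omega$ (together with $P$ and $F$) by a game-theoretic / diagonalization argument played along a binary tree, processing two bits at a time. Think of the adversary as the pair $(\qodd,\qev)$; since these are lower semicomputable, we can run an approximation from below and, at stage $s$, we see current rational under-approximations $\qodd^s,\qev^s$. At a node $v=\omega_{1\dots 2n}$ that is on the sequence we are constructing, we look at the four grandchildren $v00,v01,v10,v11$ and the values $(\qodd\qev)(v\sigma)$ for $\sigma\in\{00,01,10,11\}$. The key combinatorial fact to establish is: \emph{among the four grandchildren there is always one, say $v\sigma^\ast$, with $(\qodd\qev)(v\sigma^\ast)\le \tfrac34\,(\qodd\qev)(v)$.} This is where the constant $3/4$ comes from, and I expect it to be the crux. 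Roughly: the even semimeasure satisfies $\qev(v0)+\qev(v1)\le\qev(v)$ and is constant on the odd level, while the odd semimeasure is constant on the even level and splits on the odd level; multiplying, the total "mass" $\sum_\sigma(\qodd\qev)(v\sigma)$ telescopes to at most something like $\qodd(v)\qev(v)$ times a factor coming from the two independent $1$-Lipschitz splits — and one checks that four nonnegative numbers summing to at most $(\qodd\qev)(v)$, arranged in the $2\times2$ product structure forced by the two semimeasure constraints, cannot all exceed $\tfrac14(\qodd\qev)(v)\cdot$(a correction); a short case analysis / optimization over the split ratios gives the $3/4$ bound on the minimum rather than the naive $1/4$ times~$4=1$. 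I would isolate this as a standalone numerical lemma about the $2\times2$ "multiplication of splits" picture (mirroring Figure~\ref{fig:computableDecomp}).

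Given that lemma, the construction is the obvious greedy one: starting at the root with $P(\varepsilon)=1$, whenever we have committed to $v=\omega_{1\dots 2n}$ with the invariant $(\qodd\qev)(v)\le(3/4)^nP(v)$, we enumerate $\qodd,\qev$ long enough to find a grandchild $v\sigma^\ast$ with $(\qodd\qev)(v\sigma^\ast)\le\tfrac34(\qodd\qev)(v)$ — here I must be slightly careful: since we only see approximations from below, "finding such a $\sigma^\ast$" really means waiting until the approximations certify the inequality for some $\sigma$; the combinatorial lemma guarantees this eventually happens for at least one $\sigma$, so the search halts. We then set $\omega_{2n+1}\omega_{2n+2}:=\sigma^\ast$ and define $P$ on the four grandchildren so as to keep $P$ a semimeasure while maintaining $(3/4)^{n+1}P(v\sigma^\ast)\ge(\qodd\qev)(v\sigma^\ast)$; concretely one lets $P(v\sigma^\ast)$ absorb the full mass $P(v)$ along the chosen branch (and $P=0$, or $P$ copied appropriately, on the discarded grandchildren and their subtrees that we never visit, subject only to the semimeasure inequality at $v$). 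Because the whole process is driven by the enumerations of $\qodd$ and $\qev$, $P$ is lower semicomputable, and the sequence $\omega$ is produced in the limit.

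Finally, the function $F$: by construction $\omega_{2n+1}$ is the first bit of the grandchild $\sigma^\ast$ chosen at node $\omega_{1\dots 2n}$, and that choice is a computable function of $\omega_{1\dots 2n}$ \emph{together with the halting of the search}, which in turn is witnessed as soon as we also know enough of the future — in particular knowing $\omega_{2n+2}$ (the second bit of $\sigma^\ast$) plus $\omega_{1\dots 2n}$ lets us run the search, recognize which $\sigma$ was selected (it is the lexicographically-least $\sigma$ whose second bit matches $\omega_{2n+2}$ and for which the inequality gets certified — or we simply build the construction so that $\sigma^\ast$ is the unique selected grandchild), and output its first bit $\omega_{2n+1}$. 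Hence $F(\omega_{1\dots 2n},\omega_{2n+2})=\omega_{2n+1}$ as required. The main obstacle, to reiterate, is proving the $3/4$ grandchild lemma with the right constant; everything after that is the standard lower-semicomputable greedy-tree bookkeeping, and the $F$-clause is essentially free once the selection rule at each node is made deterministic.
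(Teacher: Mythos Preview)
Your proposal has a genuine gap, exactly at the step you flag as needing care. The combinatorial fact you want --- that some grandchild $v\sigma^\ast$ satisfies $(\qodd\qev)(v\sigma^\ast)\le\tfrac34\,(\qodd\qev)(v)$ --- is true of the \emph{limit} values, and in fact with the much better constant $\tfrac14$: one checks directly that the four grandchild products sum to at most $(\qodd\qev)(v)$. But this lemma cannot drive the construction, because with only lower approximations to $\qodd,\qev$ you can certify that a product is \emph{large}, never that it is \emph{small}. At stage~$0$ every approximation is~$0$, so your search terminates immediately on, say, $\sigma^\ast=00$; you commit all of $P(v)$ there, and then the adversary is free to set $\qodd(v0)=\qodd(v)$ and $\qev(v00)=\qev(v)$, making $(\qodd\qev)(v00)=(\qodd\qev)(v)$ in the limit and destroying your invariant. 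This is not a bookkeeping issue: the whole content of the proposition is that $\qodd,\qev$ \emph{react} to where you place $P$, and a one-shot greedy choice based on current approximations hands them a free target.

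The paper's argument is accordingly not greedy but bait-and-react. At each level it first spends a small probe (worth $\tfrac14$ of the available budget) on the leaf $x00$ and then \emph{waits} for the adversary to push either $\qodd(x0)$ or $\qev(x00)$ above half of its current threshold --- a condition that \emph{is} certifiable from below. Whichever factor commits first is then, by the semimeasure inequality, provably at most half on the sibling branch, and the remaining $\tfrac12$ of the budget is placed there (on $x11$ or $x01$ respectively); if neither ever commits, the probe alone already wins. The constant $\tfrac34=\tfrac14+\tfrac12$ is thus the total budget spent, not a bound on a $2\times2$ minimum. The bit $\omega_{2n+2}$ records whether a commitment was observed, which is precisely why $F$ can recompute $\omega_{2n+1}$ from it. To repair your outline, replace ``find a small grandchild'' by this two-move probe that forces the adversary to commit first.
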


To develop some intuition, we first consider a game. 
The game is played between two players (Alice and Bob) who alternate turns. Alice maintains values
for $P(x)$ on $2$-bit $x$.  At each round she might pass or increase some values
as long as $\sum \{P(x): |x| = 2\} = 3/4$.  Bob maintains lower semicomputable
odd and even semimeasures $\qodd(x)$ and $\qev(x)$,
see figure~\ref{fig:game}.
Also Bob might pass or increase some values as long as the conditions of the definition of online
semimeasure are satisfied, (hence $\max\{ p+q, r+s, u+v \} \le 1$ in figure~\ref{fig:game}). 
Alice wins if in the limit $P(x) \ge \qodd(x)\qev(x)$ holds for some $x$ (i.e. 
if $P(00) \ge pr$ or $P(01) \ge ps$ or $P(10) \ge qu$ or $P(11) \ge qv$).

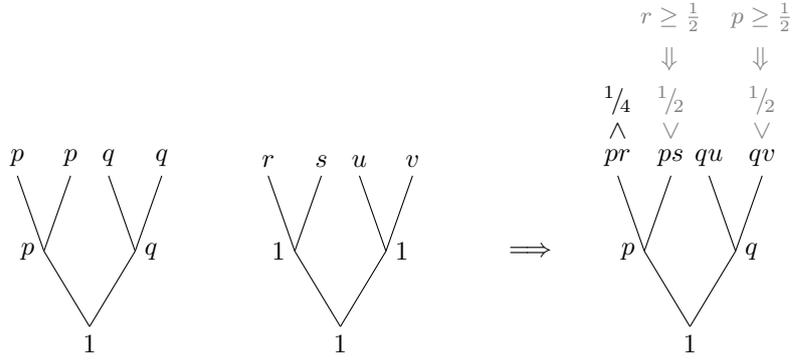
\begin{figure}
  \centering
  \begin{tikzpicture}[grow=up]

    \coordinate (odd) \treeGenerate;
    \treeLabel{odd}{$p$}{$p$}{$q$}{$q$}{$p$}{$q$}{$1$}

    \coordinate[right=3.3 of odd] (even) \treeGenerate;
    \treeLabel{even}{$r$}{$s$}{$u$}{$v$}{$1$}{$1$}{$1$}

    \node at (5.8,1) {$\Longrightarrow$};

    \coordinate[right=4.6 of even] (root) \treeGenerate;
    \treeLabel{root}{$pr$}{$ps$}{$qu$}{$qv$}{$p$}{$q$}{$1$}

      \path (root-2-2) -- node[midway,rotate=90,shift={(+0.1,0)}] {$>$} 
	+(0,1)  node {$\tikzfrac{1}{4}$};
      \path (root-2-1) -- node[midway,rotate=90,shift={(+0.1,0)},gray] {$<$} 
	++(0,1)  node[gray] {$\tikzfrac{1}{2}$} -- node[midway,rotate=90,gray] {$\Leftarrow$} 
	++(0,1.1)  node[gray] {\small{$r \ge \frac{1}{2}$}};
      \path (root-1-1) -- node[midway,rotate=90,shift={(+0.1,0)},gray] {$<$} 
	++(0,1)  node[gray] {$\tikzfrac{1}{2}$} -- node[midway,rotate=90,gray] {$\Leftarrow$} 
	++(0,1.1)  node[gray] {\small{$p \ge \frac{1}{2}$}};

%      \path (root-2-2) -- node[midway,rotate=90,shift={(+0.05,0)}] {$>$} 
%	+(0,1.2)  node {$\tikzfrac{1}{9}$};
%      \path  (root-1-1) -- ++(0,0.7) node  {$q's'$}
%      -- node[midway,rotate=90,shift={(-0.05,0)}] {$>$} 
%	++(0,0.8) node {$\tikzfrac{4}{9}$};
  \end{tikzpicture}
  \caption{Game for Proposition \ref{prop:3_4th} with $n=1$
    \label{fig:game}
  }
\end{figure}

In this game Alice has a winning strategy. She starts by putting $1/4$ at one leaf and 
zero at the others, say $P(00) = 1/4$.
Then she waits until Bob increases either $\qodd$ or $\qev$ above $1/2$ at this leaf
(thus $\qodd(0) = \qodd(00) > 1/2$ or $\qev(00)  >1/2$).
If none of this happens, Alice wins.
Otherwise if $\qodd(0) > 1/2$, she plays $P(11) = 1/2$ and if 
$\qev(00) >1/2$, she plays $P(01) = 1/2$. 
In the first case Alice wins because $\qodd(1) \le 1-\qodd(0) < 1/2$  and 
hence $\qodd(1)\qev(11) < 1/2$ and in the second case she wins because $\qev(01) \le 1-\qev(00) < 1/2$ and hence 
$\qodd(0)\qev(01) < 1/2$. Note that in both cases $\sum \{P(x): |x|=2\} = 1/2 + 1/4$, (and 
otherwise it is $1/4$) and Alice's condition is always satisfied. 
(Also note that the second bit of $x$ on which Alice wins is $1$ if $\qodd(0) > 1/2$ or
$\qev(00) > 1/2$. So for lower-semicomputable $\qodd$ and $\qev$, we can use this bit
to determine which inequality was first realized, and hence to compute the first bit of $x$. A similar
observation will be used to construct $F$ in the proof below.)

To show the proposition, we need to concatenate strategies for the game above to strategies for larger games. 
For this, it seems that the winning rule needs to be strengthened, and this makes either the winning
rule or the winning strategy for the small game complicated. Therefore, in the more concise proof
below, we gave a formulation without use of game technique. 

\begin{proof}
We construct $\omega_{1\dots2n}$ together with thresholds $o_{n}, e_{n}$ inductively.
Let $o_0 = e_0 = 1$. For $x$ of length $2n$,
consider the conditions $\qodd(x0) > o_{n}/2$ and $\qev(x00) > e_{n}/2$.
We fix some algorithm that enumerates $\qodd$ and $\qev$ from below and 
after each update tests both conditions. 
Let $O_x$ be the condition that $\qodd(x0) > o_{n}/2$ 
is true at some update and $\qev(x00) > e_{n}/2$ did not appear at any update strictly before;
and let $E_x$ be the condition that
$\qev(x00) > e_{n}/2$ is true after some update but $\qodd(x0) > o_{n}/2$ is false at the current
update (and hence at any update before).
Note that $O_x$ and $E_x$ cannot happen both. Let
\[
\left( \omega_{2n+1}\omega_{2n+2}, o_{n+1}, e_{n+1} \right) 
= \begin{array}{rllll}
  ( 11, & o_{n}/2, & e_{n} ) & \text{if $O_{\omega_{1\dots2n}}$ happens,} \\
  ( 01, & o_{n}, & e_{n}/2 ) & \text{if $E_{\omega_{1\dots2n}}$ happens,} \\
  ( 00, & o_{n}/2, &  e_{n}/2 ) & \text{otherwise.}
\end{array}
\]
By induction  it follows that 
$o_{n} \ge \qodd(\omega_{1\dots2n})$ and $e_{n} \ge \qev(\omega_{1\dots2n})$.
Indeed, this follows directly for $n=0$. For $n \ge 1$, consider the case where $O_{\omega_{1\dots 2n}}$
happens. Thus $\omega_{1\dots 2n+2} = \omega_{1\dots 2n+1}1$ and
\[
   \qodd(\omega_{1\dots 2n}1) \le \qodd(\omega_{1\dots 2n}) -
  \qodd(\omega_{1\dots 2n}0) \le o_n - o_n/2 = o_n/2 \,.
  \]
On the other hand, $\qev(\omega_{1\dots 2n+2}) \le \qev(\omega_{1\dots 2n}) \le e_n = e_{n+1}$.
The case where $E_{\omega_{1\dots 2n}}$ happens is similar, and the last one is direct.

It remains to define $F$ and $P$ such that $F(\omega_{1\dots{2n}}, \omega_{2n+2}) =
\omega_{2n+1}$ and
\[
  P(\omega_{1\dots2n}) = (4/3)^n o_{n} e_{n}\;.
  \]
Note that $\omega_{2n+2} = 1$ iff $O_{\omega_{1\dots2n}}$ or $E_{\omega_{1\dots2n}}$ happens,
and knowing that one of the events happens, we can decide which one and therefore also
$\omega_{2n+1}$.
Hence, given $\omega_{1\dots2n}$ and $\omega_{2n+2}$ we can compute $\omega_{2n+1}$ 
and this procedure defines the partial computable function $F$.

To define~$P$, observe that $\omega$ can be approximated from below: start with $\omega = 00\dots$, 
each time $O_{\omega_{1\dots2n}}$ (respectively $E_{\omega_{1\dots2n}}$) happens, 
change $\omega_{2n}\omega_{2n+1}$ from $00$ to $01$ (respectively to $11$), 
let all subsequent bits be zero, and repeat the process. 
Hence, for all $n$ and $2n$-bit $x$ at most 
one pair $(o_n,e_n)$ is defined which we denote as $(o_x,e_x)$.
Let $P(x)$ be zero unless $(o_x,e_x)$ is defined
in which case \[P(x) = (4/3)^{|x|/2} o_xe_x\,.\] Note that $P$ is lower semicomputable and the
equation above is satisfied.  Also, $P$ is a semimeasure: 
$P(\varepsilon) = (4/3)^0 \cdot 1 \cdot 1 = 1$, and
in all three cases we have
$\sum \{ o_{xbb'}e_{xbb'} : b,b' \in \{0,1\} \}  \le 3o_xe_x/4$ hence, 
$\sum \{ P(xbb') : b,b' \in \{0,1\} \} \le P(x)$. 
\end{proof}

\bigskip
The proof of Proposition \ref{prop:onlineDouble} follows the same structure. 
\begin{proposition*}
  For all $\varepsilon > 0$ and lower semicomputable odd and even online semimeasures $\qodd$ and $\qev$,  
  there exist $\delta$, a sequence $\omega$, a lower semicomputable semimeasure~$P$, 
  and a partial computable~$F$ such that for all~$n$
  \[
  (\qodd\qev)(\omega_{1\dots n}) 
  \le (1-\delta)^n P(\omega_{1\dots n})^{2-2\varepsilon}
  \]
  and $F(\omega_{1\dots2n},\omega_{2n+2}) = \omega_{2n + 1}$.
\end{proposition*}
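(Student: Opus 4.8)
The plan is to imitate the proof of Proposition~\ref{prop:3_4th}: construct $\omega$ together with thresholds $o_n\ge\qodd(\omega_{1\dots n})$ and $e_n\ge\qev(\omega_{1\dots n})$ by induction, set $P(\omega_{1\dots n})$ equal to roughly $\bigl((1-\delta)^{-n}o_ne_n\bigr)^{1/(2-2\varepsilon)}$ so that the required inequality holds by definition of $P$, and then check that $P$ extends to a lower semicomputable semimeasure and extract the partial computable $F$ from the race that produces the bits — just as $\omega_{2n+2}$ there flags which of $\qodd$, $\qev$ has first crossed its threshold. The reason the exponent $2-2\varepsilon$ (rather than $1$) can be afforded is that $\qodd\qev$ is a product of an odd and an even online semimeasure: by the arithmetic–geometric mean inequality $\qodd\qev\le\bigl((\qodd+\qev)/2\bigr)^2$, so one "expects" $P$ to behave like $(\qodd\qev)^{1/2}$; raising a semimeasure to a power $q=1/(2-2\varepsilon)<1$ costs only a bounded factor per coordinate, and that factor can be folded into the geometric weight $(1-\delta)^{-n}$.

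The new difficulty is the threshold schedule. With the plain two-bit step of Proposition~\ref{prop:3_4th} one commits, at a node, probability mass to the "$00$"-guess and, once a race fires, to the "$01$" or "$11$"-guess; the semimeasure condition at that node then reads $\lambda^{q}+(1-\lambda)^{q}\le(1-\delta)^{2q}$ for the threshold $\lambda$, which is impossible because $\lambda^{q}+(1-\lambda)^{q}\ge\lambda+(1-\lambda)=1$ for every $\lambda$ (here $q<1$). So I would run the race in epochs of length $L=L(\varepsilon)$: inside an epoch drive each of $\qodd$ and $\qev$ down by a factor close to $2^{-L/2}$ through a sequence of threshold tests, updating the pair $(o,e)$ only at epoch boundaries, and arrange that the probability mass spent on the (at most $O(L)$) abandoned partial guesses made during an epoch is only a $2^{-\Theta(L)}$-fraction of that epoch's budget. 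The semimeasure inequality then relaxes to something of the shape $(\#\text{abandoned guesses})\cdot2^{-\Theta(L)}\le(1-\delta)^{\Theta(L)}$, which is satisfied once $L$ is large and $\delta$ is small compared with $\varepsilon$; one finally takes $\delta$ slightly smaller to absorb the $O(\log)$-terms for large $\omega$.

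The step I expect to be the main obstacle is controlling where, and how early, those abandonments happen — in particular handling an adversary that plays "unbalanced", concentrating $\qodd$ (or $\qev$) on one child, which fires the natural tests at once, lets $\qodd$ barely decrease, and simultaneously kills the per-coordinate correction so no $(1-\delta)^{n}$ gain is visible along that branch. The remedy is that on detecting such concentration one jumps to a sibling on which the corresponding semimeasure is provably small, keeping the wasted mass at the $2^{-\Theta(L)}$ level; doing this lower-semicomputably, while keeping $F$ defined on every prefix of $\omega$ — which, as in Proposition~\ref{prop:3_4th}, forces each two consecutive bits of $\omega$ to lie in $\{00,01,11\}$ with the even bit signalling that a race has occurred — is the delicate part, and is also what makes the inductive bookkeeping of $o_n$, $e_n$ go through. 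Granting all this, choosing $\qodd=\modd$, $\qev=\mev$ and using $-\log P(\omega_{1\dots n})\ge C(\omega_{1\dots n})-O(\log C(\omega_{1\dots n}))$ gives the first part of Theorem~\ref{th:onlineDouble} with a positive $\delta$, while \eqref{eq:double1} and \eqref{eq:double2} follow from $F$ verbatim as for Proposition~\ref{prop:3_4th}.
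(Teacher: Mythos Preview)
Your overall scheme---maintain thresholds $o_n\ge\qodd(\omega_{1\dots 2n})$, $e_n\ge\qev(\omega_{1\dots 2n})$, set $P$ so that $(1-\delta)^nP(\omega_{1\dots 2n})=(o_ne_n)^{1/(2-2\varepsilon)}$, and read off $F$ from $\omega_{2n+2}$---is exactly right, and you correctly spot that the threshold $o_n/2$ from Proposition~\ref{prop:3_4th} no longer yields a semimeasure once the exponent $q=1/(2-2\varepsilon)<1$ enters.

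The gap is that you then overcomplicate the remedy. No epochs are needed: the two-bit step already works if one simply \emph{lowers the threshold}. Replace the tests $\qodd(x0)>o_n/2$ and $\qev(x00)>e_n/2$ by $\qodd(x0)>o_n\delta^{1-\varepsilon}$ and $\qev(x00)>e_n\delta^{1-\varepsilon}$, and take $\delta=2^{-2/\varepsilon}$ so that $\delta^{1-\varepsilon}=4\delta$. If, say, $O_{\omega_{1\dots 2n}}$ fires, set $\omega_{2n+1}\omega_{2n+2}=11$ and $o_{n+1}=o_n\beta$ with $\beta=(1-2\delta)^{2-2\varepsilon}$; this is a valid upper bound because $\qodd(\omega_{1\dots 2n}1)\le o_n-\qodd(\omega_{1\dots 2n}0)<o_n(1-4\delta)\le o_n\beta$ for small $\delta$. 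If neither test fires, set $o_{n+1}=o_n\delta^{1-\varepsilon}$ and $e_{n+1}=e_n\delta^{1-\varepsilon}$. Now the semimeasure check becomes trivial: the ``race fired'' child contributes $(o_ne_n\beta)^{q}=\beta^{q}(o_ne_n)^{q}=(1-2\delta)(o_ne_n)^{q}$, the ``$00$'' child contributes $(o_ne_n\delta^{2-2\varepsilon})^{q}=\delta(o_ne_n)^{q}$, and their sum is $(1-\delta)(o_ne_n)^{q}$, exactly what is required. The point you missed is that making the ``$00$'' guess very cheap (mass $\delta$ rather than $1/4$) leaves almost all of the budget ($1-2\delta$) for the relocated guess, and that is enough to beat $(1-\delta^{1-\varepsilon})^{2-2\varepsilon}$ when $\delta\ll\varepsilon$.

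Your epoch scheme, the ``unbalanced adversary'' worry, and the AM--GM heuristic are all unnecessary detours; the paper's proof is a direct two-paragraph modification of Proposition~\ref{prop:3_4th}. Your sketch as written is not a proof: the epoch bookkeeping is never carried out, and the ``delicate part'' you flag is precisely where the argument would have to live.
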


\begin{proof}
We first consider the following variant for the game above on strings of length two. 
Alice should satisfy the weaker condition $\sum \{P(x): |x|=2\} \le 1-\delta$, 
where $\delta \ll \varepsilon$ will be determined later.
She wins if 
\[
 (\podd\pev)(x) \le \left(P(x)\right)^{2-2\varepsilon} 
\] 
for some $x$.
The idea of the winning strategy is to
start with a very small value somewhere, say $P(00) = \delta$. 
If $\varepsilon = 0$ then Bob could reply with $\qodd(0) = \qev(00) = \delta$, 
(in fact he could win by always choosing $\qodd(x)=\qev(x)=P(x)$). 
For $\varepsilon > 0$ and $\delta \ll \varepsilon$ one of the online semimeasures should 
exceed $\delta^{1-\varepsilon} = k\delta$ for $k=\delta^{-\varepsilon}$. 
$k$ can be arbitrarily large if $\delta \ll \varepsilon$ is chosen sufficiently small. 
At his next move, (as before), Alice puts all his remaining measure, i.e. $1-2\delta$ in a leaf 
that does not belong to a branch where the corresponding online semimeasure is large. 
Note that $1-2\delta$ is close to $1$  
and taking a power $2 \ge 2-2\varepsilon$ we see that Bob 
needs at least $1-4\delta$ in each online semimeasure, but he already used $k\delta$ in one of them.

More precisely, the winning strategy for Alice is to set $P(00) = \delta$ and wait until 
$\qodd(0) > \delta^{1-\varepsilon}$ or $\qev(00) >\delta^{1-\varepsilon}$. 
If these conditions are never satisfied, then Alice wins on $x=00$.
Suppose at some moment Alice observes that the first condition holds, 
then she plays $P(11) = 1-2\delta$, in the other case she plays $P(01)  = 1-2\delta$. 
Afterwards she does not play anymore. Note that $\sum \{P(x) : |x|=2\} \le 1-\delta$.
We show that Alice wins. Assume that $\qodd(0) > \delta^{1-\varepsilon}$ (the other case is similar). 
We know that $\qev(11) \le 1$ hence if Alice does not win, 
this implies $\qodd(1) > (1-2\delta)^{2-2\varepsilon}$.
This is lower bounded by $(1-2\delta)^2 \ge 1 - 4\delta$.
We choose $\delta = 2^{-2/\varepsilon}$. This implies
\[ 
  \delta^{1-\varepsilon} = 2^{-(2/\varepsilon)(1-\varepsilon)} 
  = 2^{-2/\varepsilon + 2} = 4\delta.
\]
Hence $\qodd(0) + \qodd(1) > 4\delta + (1-4\delta) = 1$ and 
Bob would violate his restrictions. Therefore Alice wins.
For later use notice that in the first case our argument implies $\qodd(1) \le (1-2\delta)^{2-2\varepsilon}$.

In a similar way as before we adapt Alice's strategy  to an inductive 
construction of $\omega$ and $P$: 
let $O_x$ and $E_x$ be defined as before using conditions 
$\qodd(x0) > o_{n} \delta^{1-\varepsilon}$ and 
$\qev(x00) > e_{n} \delta^{1-\varepsilon}$.
Let $\beta = (1-2\delta)^{2-2\varepsilon}$ and let $\omega, o_{n}$ and $e_{n}$ be given by
\[
  \left( \omega_{2n+1}\omega_{2n+2}, o_{n+1}, e_{n+1} \right) 
  = \begin{array}{rllll}
    ( 11, & o_{n}\beta, & e_{n} ) & \text{if $O_{\omega_{1\dots2n}}$ happens,} \\
    ( 01, & o_{n}, & e_{n}\beta ) & \text{if $E_{\omega_{1\dots2n}}$ happens,} \\
    ( 00, & o_{n}\delta^{1-\varepsilon}, &  e_{n}\delta^{1-\varepsilon}) & \text{otherwise.}
  \end{array}
\]
This implies
$o_{n} \ge \qodd(\omega_{1\dots2n})$ and $e_{n} \ge \qev(\omega_{1\dots2n})$.
$F$ is defined and shown to satisfy the condition in exactly the same way.
It remains to construct $P$ such that 
\[
(1-\delta)^n P(\omega_{1\dots2n}) =  \left( o_{n} e_{n} \right)^{1/(2-2\varepsilon)}\;,
\]
(the proposition follows after rescaling $\delta$).
In a similar way as before $o_x$ and $e_x$ are defined and let 
\[ 
P(x) = (1-\delta)^{-|x|/2} (o_xe_x)^{1/(2-2\varepsilon)} \,.
\]
To show that $P$ is indeed a semimeasure observe that $\sum \{P(xbb'): b,b' \in \{0,1\} \}$ 
\begin{eqnarray*}
  && = (1-\delta)^{-|x|/2 - 1}\sum \{ \left(o_{xbb'}e_{xbb'}\right)^{1/(2-2\varepsilon)} : b,b' \in \{0,1\} \} \\ 
  && \le (1-\delta)^{-|x|/2 - 1} \left(\beta^{1/(2-2\varepsilon)}  + \delta \right) \left(o_xe_x\right)^{1/(2-2\varepsilon)} \,,
\end{eqnarray*}
and because $\beta^{1/(2-2\varepsilon)} = 1-2\delta$ this equals
\[
  = (1-\delta)^{-|x|/2}  \left(o_xe_x\right)^{1/(2-2\varepsilon)} = P(x)\,.
  \qedhere
  \]
\end{proof}

\subsubsection*{Acknowledgements}
  The author is grateful to Alexander Shen, Nikolay Vereshchagin, Andrei Romashchenko, Mikhail
  Dektyarev,  Ilya Mezhirov and Emmanuel Jeandel for extensive discussion and many useful suggestions.
  I also thank Ilya Mezhirov for implementing clever code to study some games.
  Especially thanks to Alexander Shen for encouragement after presenting earlier results and
  for arranging funding by grant NAFIT ANR-08-EMER-008-01. The author is also grateful to Mathieu 
  Hoyrup who arranged a grant under which the work was finalized. 
\bibliography{eigen,kolmogorov,practCausalities,statisticalCausalities}

\begin{thebibliography}{10}

\bibitem{BauwensPhd}
B.~Bauwens.
\newblock {\em Computability in statistical hypotheses testing, and
  characterizations of independence and directed influences in time series
  using {K}olmogorov complexity}.
\newblock PhD thesis, U{G}ent, may 2010.

\bibitem{BauwensAdditivity}
B.~Bauwens and A.~Shen.
\newblock An additivity theorem for plain {K}olmogorov complexity.
\newblock {\em Theory Computing Systems}, 52(2):297--302, 2013.

\bibitem{onlineComplexity}
A.~Chernov, A.~Shen, N.~Vereshchagin, and V.Vovk.
\newblock On-line probability, complexity and randomness.
\newblock In {\em ALT '08: Proceedings of the 19th international conference on
  Algorithmic Learning Theory}, pages 138--153, Berlin, Heidelberg, 2008.
  Springer-Verlag.

\bibitem{braintransfer1}
U.~Feldmann and J.~Bhattacharya.
\newblock Predictability improvement as an asymmetrical measure of
  interdependence in bivariate time series.
\newblock {\em International Journal of Bifurcation and Chaos}, 14(2):505--514,
  2004.

\bibitem{GacsNotes}
P.~G{\'a}cs.
\newblock Lecture notes on descriptional complexity and randomness.
\newblock \url{http://www.cs.bu.edu/faculty/gacs/papers/ait-notes.pdf},
  1988--2011.

\bibitem{granger}
C.W.J. Granger.
\newblock Investigating causal relations by econometric models and
  cross-spectral methods.
\newblock {\em Econometrica}, 37:424--438, 1969.

\bibitem{geweke}
G.~John.
\newblock Inference and causality in economic time series models.
\newblock In Z.~Griliches and M.~D. Intriligator, editors, {\em Handbook of
  Econometrics}, volume~2, chapter~19, pages 1101--1144. Elsevier, 1984.

\bibitem{LiVitanyi}
M.~Li and P.M.B. Vit{\'a}nyi.
\newblock {\em An Introduction to {K}olmogorov Complexity and Its
  Applications}.
\newblock Springer-Verlag, New York, 2008.

\bibitem{MuchnikOnline}
An.A. Muchnik.
\newblock Algorithmic randomness and splitting of supermartingales.
\newblock {\em Problems of Information Transmission}, 45(1):54--64, March 2009.

\bibitem{palus}
M.~Palus and A.~Stefanovska.
\newblock Direction of coupling from phases of interacting oscillators: an
  information theoretic approach.
\newblock {\em Physical Review E, Rapid Communications}, 67:055201(R), 2003.

\bibitem{overview2013}
A.~Papana, C.~Kyrtsou, D.~Kugiumtzis, and C.~Diks.
\newblock Simulation study of direct causality measures in multivariate time
  series.
\newblock {\em Entropy}, 15(7):2635--2661, 2013.

\bibitem{Pearl}
J.~Pearl.
\newblock {\em Causality: Models, Reasoning and Inference}.
\newblock Cambridge University Press, 2000.

\bibitem{abdul2013mutual}
F.A. Razak.
\newblock {\em Mutual information based measures on complex interdependent
  networks of neuro data sets}.
\newblock PhD thesis, Imperial College London, March 2013.

\bibitem{rosenblum}
M.~G. Rosenblum and A.~S. Pikovsky.
\newblock Detecting direction of coupling in interacting oscillators.
\newblock {\em Phys. Rev. E}, 64(4):045202, Sep 2001.

\bibitem{Sch2000}
T.~Schreiber.
\newblock Measuring information transfer.
\newblock {\em Physical Review Letters}, 85(2):461--464, Jul 2000.

\bibitem{bookShenVereshchagin}
V.A. Uspensky, N.K.Vereshchagin, and A.Shen.
\newblock {K}olmogorov complexity and algorithmic randomness.
\newblock To appear.

\bibitem{ShenRelations}
V.A. Uspensky and A.~Shen.
\newblock Relations between varieties of {K}olmogorov complexities.
\newblock {\em Theory of Computing Systems}, 29(3):271--292, 1996.

\bibitem{Wint2005}
M.~Winterhalder, B.~Schelter, W.~Hesse, K.~Schwab, L.~Leistritz, R.~Bauer,
  J.~Timmer, and H.~Witte.
\newblock Comparison of linear signal processing techniques to infer directed
  interactions in multivariate neural systems.
\newblock {\em Signal Processing}, 85:2137--2160, 2005.

\bibitem{ZvonkinLevin}
A.K. Zvonkin and L.A. Levin.
\newblock The complexity of finite objects and the development of the concepts
  of information and randomness by means of the theory of algorithms.
\newblock {\em Russian Mathematical Surveys}, 25(6:156):83--124, 1970.

\end{thebibliography}

\appendix

\section{Proof of the upper bound: Theorem~\ref{th:upperbound}}
\label{sec:upperbound}

It remains to prove  Proposition~\ref{prop:upperbound}, which we repeat.
\begin{proposition*}
  For all lower semicomputable semimeasures $Q$,
  there exist $\alpha > \sqrt{1/2}$
 and a family of odd and even semimeasures~$\poddn{n}$ and~$\pevn{n}$ uniformly lower-semicomputable in $n$, 
 such that for all~$x$ 
 \begin{equation}%\label{eq:goalup}
   \poddn{|x|}(x)\pevn{|x|}(x) \ge \alpha^{|x|}Q(x)/4\,.
 \end{equation}
\end{proposition*}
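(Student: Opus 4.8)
The plan is to construct, for each target length $n$, the odd and even semimeasures $\poddn{n}$ and $\pevn{n}$ directly, so that the product bound holds at every string of length $n$. A first attempt already reaches the trivial exponent $\alpha=\sqrt{1/2}$: let $\poddn{n}(y)=\sum\{Q(x):|x|=n,\ x_i=y_i\text{ for every odd }i\le|y|\}$ and let $\pevn{n}$ be the uniform even semimeasure $x\mapsto 2^{-\lfloor|x|/2\rfloor}$. One checks that $\poddn{n}$ is a lower semicomputable odd semimeasure with $\poddn{n}(x)\ge Q(x)$, so $\poddn{n}(x)\pevn{n}(x)\ge 2^{-|x|/2}Q(x)$. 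The loss here is entirely the even player's ignorance, and it cannot be removed by a cleverer static $\pevn{n}$: the ``online renormalisation'' of the even bets that would be needed amounts to dividing by a lower semicomputable quantity and is not lower semicomputable (this is exactly why Lemma~\ref{lem:computableDecomp} needs $P$ computable). So the task is to recover, in a lower semicomputable way, a constant fraction of that renormalisation.

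To do this I would process the string two positions at a time and run a reactive construction --- most naturally phrased, as in the lower bounds, as a game against an adversary who reveals the relevant values of $Q$ from below in the worst order. At a node $x$ of even length the decomposer distributes the odd budget $\podd(x0)+\podd(x1)\le\podd(x)$ and, separately, the two even budgets $\pev(x00)+\pev(x01)\le\pev(x)$ and $\pev(x10)+\pev(x11)\le\pev(x)$; the point is that the even player gets a fresh copy of $\pev(x)$ on each of the two ``odd sides'', which is precisely the slack that uniform betting throws away. The invariant to maintain is $\podd(x)\pev(x)\ge\alpha^{|x|}Q(x)/4$ together with a two-sided bound on the ratio $\podd(x)/\pev(x)$ so that the induction can continue. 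The local rule keeps the odd bets $\podd(x0),\podd(x1)$ roughly balanced (each within a bounded factor of $\podd(x)$), and whenever the still-incomplete picture of $Q$ forces some grandchild to need more than its balanced share, the surplus is absorbed into the corresponding even budget rather than into the odd budget --- the odd budget being shared between the two sides, the even budgets not. A case analysis at a single block, halved by the odd/even symmetry, should then show that for some $\alpha=\sqrt{1/2}+c$ with $c>0$ the invariant survives every revelation order, and the induction composes over levels.

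The hard part is this last point: reconciling lower semicomputability with a genuine gain over $\sqrt{1/2}$. If $Q$ were known in advance the decomposition would be essentially lossless, so the whole difficulty is to quantify the cost of the worst revelation order and to show that it is strictly below a factor $2$ per two levels. Concretely one must pick the routing rule and the ratio window so that the adversary cannot first force a large odd-budget commitment on one side and then cash in on the depleted odd budget of the other side, and then check that the resulting bound on $c$ is positive; it need not be large (Theorem~\ref{th:onlineMain} shows $c\le\tfrac12\log\tfrac32$ in any case). Assembling the family is routine: lower semicomputability uniformly in $n$ holds because every value is a computable combination of the approximations to $Q$, the ratio invariant delivers the $\alpha^{|x|}$ factor, and the constant $4$ pays for the first block and for mixing in the uniform fallback.
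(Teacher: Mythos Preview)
Your plan correctly identifies the trivial $\alpha=\sqrt{1/2}$ baseline and the need for a reactive two-bits-at-a-time allocation, but it explicitly defers the one step that matters: you write that ``a case analysis at a single block\ldots should then show that for some $\alpha=\sqrt{1/2}+c$ with $c>0$ the invariant survives'', and later that ``one must pick the routing rule and the ratio window\ldots and then check that the resulting bound on $c$ is positive''. That check \emph{is} the proposition; everything else is scaffolding. Moreover the invariant you propose---a bounded window on the ratio $\podd(x)/\pev(x)$---is not obviously the right one, and nothing in your sketch prevents the adversary from repeatedly pushing that ratio against the same wall of the window so that the accumulated slippage over $n$ levels erases any fixed per-level gain~$c$. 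As written, the proposal is a description of the difficulty rather than a resolution of it.

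The paper supplies the missing idea, and it is not one your framework anticipates. It passes to the vector $u=\sqrt{Q}$, so that the semimeasure constraint becomes $\ntwo{u}\le1$, and observes that the odd and even semimeasure constraints on the leaf-value vectors $o,p$ are exactly bounds on two alternating max/sum norms $\noi{\cdot}$ and $\nio{\cdot}$. The gain over $\sqrt{1/2}$ then comes from the strict gap between $\ntwo{\cdot}$ and these alternating norms, exploited via a \emph{balanced/unbalanced} dichotomy on the ratio of $\ntwo{\ufirst}$ to $\ntwo{\ulast}$ (the two halves of $u$, not of $\podd/\pev$): in the balanced regime one sets $o=(1+5\varepsilon)u$ and $p\approx(1-4\varepsilon)u/\sqrt2$; in the unbalanced regime the excesses are swapped. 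Lower semicomputability under increasing $u$ is preserved by taking a running max with the last value from the opposite regime, and the entire verification reduces to four explicit scalar inequalities in $a=\ntwo{\ufirst}$, $b=\ntwo{\ulast}$ and their earlier values. The induction over levels then composes cleanly because all conditions are norm inequalities and norms are homogeneous and subadditive. None of this $\ell_2$-versus-alternating-norm machinery is present in your sketch, and without it there is no argument that $c>0$.
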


Consider the game defined before the proof of
Proposition~\ref{prop:3_4th} on $2$-bit $x$, where we replace $3/4$ by 
any real number $\alpha^2$.
Thus, Alice enumerates non-decreasing values $P(x)$ such that $P(x) \le \alpha^2$
and Bob enumerates $\podd(x)$ and $\pev(x)$ for all two-bit $x$. 
Bob wins if $\podd(x)\pev(x) > P(x)$ for all~$x$.
It can be shown that  Bob has a winning strategy in this game iff $\alpha^2 < 2/3$, and 
his strategy is uniformly computable in~$\alpha$. (Otherwise Alice wins, see
appendix~\ref{sec:2_3th}.)
Proposition~\ref{prop:upperbound} would follow for all $\alpha^2 < 2/3$ if 
induction schemes for these strategies exist. 
Unfortunately, we do not know such schemes.
Therefore, a game is formulated with a stronger winning condition for Bob that 
allows to concatenate winning strategies. We show a winning strategy for
some $\alpha^2$ (significantly below $2/3$ and) above~$1/2$.

\begin{proof}
 Let $\varepsilon>0$ be small to be determined later.
 For every $n$ and every lower semicomputable semimeasure $P_n$ defined on all $n$-bit $x$
 we construct lower semicomputable $\podd$ and $\pev$ such that
 $(\podd\pev)(x) \ge (1/\sqrt{2} + \varepsilon/2)^n P_n(x)/4$ (on all $n$-bit $x$).
 Moreover, our construction is uniform in $n$ and $P_n$ and this implies  
 Proposition~\ref{prop:upperbound}.

 We will  represent $\sqrt{P_n}$ by a $2^n$-dimensional real vector $u$ such that
 $u_1$, $u_2$, \dots, $u_{2^n}$ equal the values of $\sqrt{P_n(x)}$ on all $n$-bit $x$ in lexicographic order.
 Note that the definition of semimeasure implies $\ntwo{u} \le 1$.

 We construct online semimeasures $\podd$ and $\pev$ from $2^n$-dimensional
 vectors~$o$ and~$p$ by defining $\podd$ and $\pev$ to be the smallest odd and even
 online semimeasures whose values on all $n$-bit $x$ in lexicographic order do not exceed
 $o_1$, \dots, $o_{2^n}$ and $p_1, \dots, p_{2^n}$.
 Note that such $\podd$ and $\pev$ can be computed from $o$ and $p$ because for fixed $n$ only
 finitely many binary max and sum operations appear in the computation.

 To see whether some $2^n$-dimensional vectors~$o$ and~$p$ indeed define such semimeasures,
 consider all large enough functions $\pev$ and $\podd$ satisfying conditions $i$ and $ii$ 
 in the definition of online semimeasures. 
 Let us derive these minimal values for $\podd(\varepsilon)$ and $\pev(\varepsilon)$ (which should
 be at most one).
 For the case $n=1$, $o$ and $p$ are $2$-dimensional 
 and the minimal root values are $\ninf{o}$ and $\none{p}$.
 For $n>1$, note that in the tree representation of $\podd$ and $\pev$, 
 a value of a node is lower bounded by either the max or the sum 
 of its child values (see figure~\ref{fig:norms}).
 Hence, we define the norms $\noi{.}$ and $\nio{.}$ inductively:
 for a $1$-dimensional vector $u$ let $\noi{u} = \nio{u} = u_1$.
 For an even dimensional vector $u$ let let $\ufirst$ and $\ulast$ denote the first and last 
 halve of indices of $u$. 
 For a $2^{n+1}$-dimensional vector $u$ let
 \begin{eqnarray*}
   \noi{u} &=& \ninf{\nio{\ufirst}, \nio{\ulast}}\\
   \nio{u} &=& \none{\noi{\ufirst}, \noi{\ulast}} \;.
 \end{eqnarray*}
 Note that $\noi{\cdot}$ and $\nio{\cdot}$ indeed define norms.
 We construct functions $o$ and~$p$ such that 
 \begin{enumerate}
   \item \;\;$\noi{o(u)} \le \ntwo{u}$, \label{cond:o}
   \item \;\;$\nio{p(u)} \le \ntwo{u}$, \label{cond:p}
   \item \;\;$o(u)p(u) \ge (1/\sqrt{2} + \varepsilon/2)^n u^2$, \label{it:mult} \label{cond:winning}
 \end{enumerate}
 (vectors are multiplied point wise).
 We do this in an effective way and guarantee that if some coordinates of $u$ are non-decreasingly
 updated, then also $o$ and $p$ have non-decreasing updates. These updates might depend on 
 the history of updates for $u$. %so $u$ in the formulas above represents a finite series of vectors.
 By the above discussion, the functions $o$ and $p$ define the requested lower semicomputable online semimeasures. 
 
  \begin{figure}
    \centering
    \begin{tikzpicture}[grow=up]
      \tikzstyle{level 1}=[level distance=1cm, sibling distance=4cm]
      \tikzstyle{level 2}=[level distance=1cm, sibling distance=2cm]
      \tikzstyle{level 3}=[level distance=1cm, sibling distance=1cm]

      \node (norms) {+}
      child {
	 node {M}
	 child {node {+} child {node {$u_8$}} child {node {$u_7$}} }
	 child {node {+} child {node {$u_6$}} child {node {$u_5$}} }
	 }
      child {
	 node {M}
	 child {node {+} child {node {$u_4$}} child {node {$u_3$}} }
	 child {node {+} child {node {$u_2$}} child {node {$u_1$}} }
	 };
     \end{tikzpicture}
     \caption{$\nio{u}$ for $u \in \mathbb{R}^8$}
     \label{fig:norms}
  \end{figure}

 \medskip
 It remains to construct $o$ and $p$. For $1$-dimensional $u$ (i.e. $n=0$) we choose
 $o(u)=p(u)=u$. Clearly the conditions are satisfied. 
 We explain the induction step. Suppose the case $n = 1$ is solved. In fact, for small $\varepsilon$ 
 our solution will be approximately $o(u) = (1+a_u\varepsilon)u$ and $p(u) = (1+b_u\varepsilon)u/\sqrt{2}$ 
 for some (piecewise constant) rational numbers $a_u$ and~$b_u$ such that $a_u+b_u \ge 1$. 
 %For $u$ of even dimension, let $\ufirst$ and $\ulast$ denote the first and last halves of $u$.
 Let $^\frown$ denote concatenation; thus $u = \ufirst^\frown\ulast$. 
 For the (approximate) induction step, first evaluate
 %The idea of the induction step %to compute $o(u)$ and $p(u)$ is 
 $o(\ufirst), p(\ufirst), o(\ulast), p(\ulast)$ and 
 $v_u = [\ntwo{\ufirst}, \ntwo{\ulast}]$.
 Then combine it using the factors from the 
 $n=1$ case: 
 $o(u) = (1+a_{v_u}\varepsilon)(p(\ufirst)\cat p(\ulast))$ 
 and similar for $p(u)$. 
 %This is essentially the induction scheme that will be used.
 %let $u = y\cat z$ with $|y| = |z|$; with $\cat$ being the concatenation of vectors. 

\begin{figure}
  \centering
  \begin{tikzpicture}[grow=up,scale=0.7, every node/.style={transform shape}]
    \tikzstyle{level 1}=[level distance=1cm, sibling distance=8cm]
    \tikzstyle{level 2}=[level distance=1cm, sibling distance=4cm]
    \tikzstyle{level 3}=[level distance=1cm, sibling distance=2cm]

    \node (norms) 
    {$\overline{p}(v_{u}, \searrow\cat\swarrow)$} 
    child {
       node 
	{$\overline{o}(v_{u_{5\dots 8}}, \searrow\cat\swarrow)$} 
       child 
       child
       }
    child {
       node 
    {$\overline{o}(v_{u_{1\dots 4}}, \searrow\cat\swarrow)$} 
       child {
          node {$\overline{p}(u_3u_4, u_3u_4)$} 
	  child {node {$u_4$}} 
	  child {node {$u_3$}} 
	 }
       child {
          node {$\overline{p}(u_1u_2, u_1u_2)$} 
	  child {node {$u_2$}} 
	  child {node {$u_1$}} }
       };
   \end{tikzpicture}
   \caption{$p(u)$ for $u \in \mathbb{R}^8$}
   \label{fig:induction}
\end{figure}
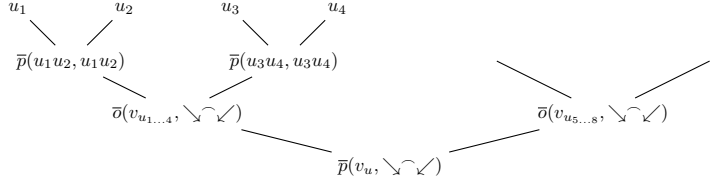

More precisely, we construct functions
$\ovl{o}(v,u)$ and $\ovl{p}(v,u)$ defined for $2$-dimensional $v$ and real vectors $u$ of any
dimension.  (The functions are obtain from the $n=1$
step.) %In fact for $2$-dimensional $u$ we have $o(u) = \ovl{o}(u,u)$ and $p(u) = \ovl{p}(u,u)$.)
$o$ and $p$ for $u$ of dimension $2^n \ge 2$ are inductively defined by
 \begin{eqnarray*}
   o(u) &=&  \ovl{o}\left(v_u, p(\ufirst)^\frown p(\ulast) \right) \\
   p(u) &=&  \ovl{p}\left(v_u, o(\ufirst)^\frown o(\ulast) \right) \,.
 \end {eqnarray*}
 The computation of $p(u)$ is illustrated in figure~\ref{fig:induction} for $n=3$.
% assume that in the $n=1$ step we have constructed $o(u)$ and $p(u)$ (for 2-dimensional $u$). 
% The functions can be defined more generally as 
 The functions $\ovl{o}$ and $\ovl{p}$ satisfy the following properties:
 \begin{itemize}
   \item 
     if $\noi{\ufirst}\le v_0$ and $\noi{\ulast} \le v_1$, then $\nio{\ovl{p}(v,u)} \le \ntwo{v}$,
   \item 
     if $\nio{\ufirst}\le v_0$ and $\nio{\ulast} \le v_1$, then $\noi{\ovl{o}(v,u)} \le \ntwo{v}$,
   \item $(\ovl{o}\,\ovl{p})(v,u) \ge (1/\sqrt{2} + \varepsilon/2)u^2$,
 \end{itemize}
 Remind that $v_u = [\ntwo{\ufirst}, \ntwo{\ulast}]$, thus $\ntwo{v_u} = \ntwo{u}$.
 By induction, $o$ and $p$ satisfy properties 1 and 2, and property 3 follows directly.

 \medskip
 We now construct $o(u)$ and $p(u)$ for $2$-dimensional $u$ such that conditions [1]-[3] are
 satisfied; afterwards we finish the proof by generalizing the construction to obtain
 $\ovl{o}$ and $\ovl{p}$ satisfying the conditions of the induction step. 

 Let us try the following solution: $o(u) = u$ and $p(u) = \alpha u$ for some $\alpha > 1/\sqrt{2}$. 
 Clearly, [1] and [3] are satisfied (for some $\varepsilon > 0$).
 Consider the second condition:
 $u_0 + u_1 \le \sqrt{u_0^2 + u_1^2}$ ($2$-dimensional vectors are labeled
 by $0$ and $1$, rather than $1$ and $2$).
 For $u_0 = u_1$ this can only hold if $\alpha \le
 1/\sqrt{2}$, violating the assumption. On the other hand, in this special case [3] is satisfied within 
 a large margin.
 See figure~\ref{fig:naive} for an illustration of the conditions. 
 Hence let us try the solution $o(u) = \alpha u$ and $p(u) = u/\sqrt{2}$ for some $\alpha > 1$.
 Now [2] and [3] are satisfied. Choose $u_0 = 0$ and $u_1 = 1$, 
 the first condition implies $\alpha \le 1$, again violating the assumption. 
 On the other hand, [2] is satisfied within a large margin.

\begin{figure}
  \centering
  \begin{tikzpicture}
    \draw[<-] (2.5,0) node[anchor=north] {$u_0$} -- (0,0);
    \draw[<-] (0,3) node[anchor=east] {$u_1$} -- (0,0);
    \fill  (0,1.9) circle (2pt);
    \draw  (0,2.4) circle (3pt);
    \node  at (0.03,1.2) {+};

    \fill  (1.7,1.7) circle (2pt);
    \draw (0,0) -- (1.7,1.7);
    \draw  (1.8,2) circle (3pt);
    \draw[gray,dashed] (1.8,1.85) -- (1.8,0);
    \draw[gray,dashed] (1.65,2) -- (0,2);
    \node  at (1.5,0.8) {+};
    \draw[gray] (1.5,0.8) -- (1.5,0.03) -- (0,0.03);
   \end{tikzpicture}
   \caption{$p$  and $o$  for $u \in \mathbb{R}^2$ ($p,o$ and $u$ 
   are denoted as $+,\circ$ and $\bullet$).}
   \label{fig:naive}
\end{figure}
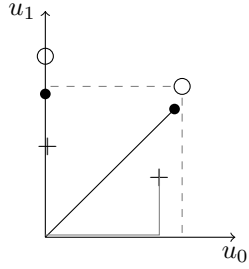

 These observations suggest to construct $o$ and $p$ differently according to the region where $u$
 is located: if $u_0 \approx u_1$ then 
 $p$ should be relatively small while $o$ is large and if $u_0/u_1$ or $u_1/u_0$ is small, then  $p$
 should be relatively large while $o$ is small.
 In the balanced region (i.e. where $u_0 \approx u_1$), 
 $p$ grows slowly at updates of $u$, so that at the critical line $u_0=u_1$ 
 the excess from the unbalanced region is compensated and $p$ is relatively small. 
 In order to satisfy [3], $o$ must grow faster.
 In the unbalanced region,  $o$ grows slower so that if $u_0/u_1$ or $u_1/u_0$ become small, 
 the excess from the balanced stage is compensated. To satisfy~[3] we need that $p$ grows fast. 
 Despite the simple idea, the precise calculations seem laborious.

 Let $u = [a,b]$. 
 $u$ is {\em balanced} iff $a\sqrt{20\varepsilon} \le b \le a/\sqrt{20\varepsilon}$,
 otherwise $u$ is  {\em unbalanced}.
 At each stage let the values of $o,p,u,\dots$ (we drop arguments $u$) 
 after the previous stage be denoted as
 $o_o, p_o, u_o, \dots$, where a subscript o abbreviates ``old''.
 For unbalanced $u$ let
 \begin{eqnarray}
   p &=&  (1+2\varepsilon)u/{\sqrt{2}} \nonumber \\
   o &=& \max \left\{ o_o, (1-\varepsilon)u \right\}\,, \label{eq:def_op1}
 \end {eqnarray}
 and if $u$ is balanced let
 \begin{eqnarray}
   p &=& \max \left\{ p_o, (1-4\varepsilon)u/{\sqrt{2}} \right\} \nonumber\\
   o &=& (1+5\varepsilon)u\,.\label{eq:def_op2}
 \end {eqnarray}
 In these definitions the max functions guarantee that $o$ and $p$ are non-decreasing in those
 regions where they grow slower.
 For small $\varepsilon$ condition [\ref{cond:winning}] is always satisfied. We need to check
 [\ref{cond:o}] and [\ref{cond:p}] in 
 the unbalanced an balanced stage. %Our argument will be recycled in the next part.

 1) Balanced $\ninf{o} \le \ntwo{u}$, i.e. $o_0 \le \ntwo{u}$ and $o_1 \le \ntwo{u}$. 
 We show the first one (the other is similar):
 \begin{equation}\label{eq:1}
  (1+5\varepsilon)a \le \sqrt{a^2 + b^2}\,, 
 \end{equation}
 i.e. $1+5\varepsilon \le \sqrt{1 + (b/a)^2}$. Because $b^2 \ge 20\varepsilon a^2$ the right hand side 
 is at least $1 + 10\varepsilon - O(\varepsilon^2)$, and this exceeds the left hand side for small
 $\varepsilon$.

 2) Unbalanced $\none{p} \le \ntwo{u}$:
 \begin{equation}\label{eq:2}
   (1+2\varepsilon)(a+b)/\sqrt{2} \le \sqrt{a^2 + b^2}\,.
 \end{equation}
 Rearranging: 
 \[
   (1+2\varepsilon)/\sqrt{2} \le  \frac{\sqrt{a^2 + b^2}}{a+b} \,. 
   \]
 Assume the case $b/a \le 20\varepsilon$, the other case is similar.
 Dropping $b^2$ in the right-hand, we have $1/(1+b/a)$ 
 which is bounded by $1-\sqrt{20\varepsilon} + O(\varepsilon)$.
 Hence it exceeds the left hand for small $\varepsilon$.

 3) Unbalanced $\ninf{o} \le \ntwo{u}$, i.e. $o_0 \le \ntwo{u}$ and 
 $o_1 \le \ntwo{u}$. Only the first is shown, the second is similar.
 Suppose no balanced stage has happened, then $o \le (1-\varepsilon)u$. Otherwise,
 let $u_o$ be the value of $u$ at the last balanced stage. In our upperbound for $o$ 
 there might be missing at most an excess 
 $(1+5\varepsilon)u_o - (1-\varepsilon)u_o = 6\varepsilon u_o$.
 Thus, the equation becomes 
 \begin{equation}\label{eq:3}
 (1-\varepsilon)a + 6\varepsilon a_o \le \sqrt{a^2 + b^2}.
 \end{equation}
 Note that $b \ge b_o \ge a_o\sqrt{20\varepsilon}$, hence for $x = a_o/a \le 1$ 
 this follows from 
 \begin{equation*}%\label{eq:3}
  1-\varepsilon + 6\varepsilon x \le \sqrt{1 + 20\varepsilon x^2} = 1 +
 10\varepsilon x^2 + O(\varepsilon^2)
 \end{equation*}
 i.e. $0 \le 10x^2 - 6x + 1 + O(\varepsilon)$. The discriminant is $36 - 40 + O(\varepsilon)$, 
 hence the inequality holds for small $\varepsilon$.
 
 4) Balanced $\none{p} \le \ntwo{u}$. In a similar way as before, we determine the excesses 
 at the last unbalanced stage and the condition becomes
 \begin{equation}\label{eq:4}
  (1-4\varepsilon)(a+b)/\sqrt{2} + 6\varepsilon(a_o + b_o)/\sqrt{2} \le \sqrt{a^2 + b^2}\,.
 \end{equation} 
 Suppose that $b_o \le a_o\sqrt{20\varepsilon}$ 
 (the other case is similar) and let $x = a/a_o$ and $y = b/a_o$, thus $x \ge 1$. 
 The equation is
 \[
  (1 - 4\varepsilon)(x+y) + 6\varepsilon + O(\varepsilon^{3/2}) \le \sqrt{2x^2 + 2y^2}\,.
 \]
 First suppose $y \ge 1$, thus $x + y \ge 2$. Observe that for varying $x$ and $y$ 
 the left hand side only depends on $x+y = z$. Note that $z \ge 2$. The smallest value in the right
 hand side is obtained 
 for $x=y=z/2$. 
 %Because we need to check the equation for all values $x \ge 1$ and $y \ge 1$, 
 %it indeed suffices to choose $x=y$ (in the case $y \ge 1$).
 In this case, terms without $\varepsilon$ cancel and the equation becomes 
 $-4z + 6 + O(\sqrt{\varepsilon}) \le 0$; which follows from $z \ge 2$.

 Suppose $y \le 1$. The slope of $x$ in the left-hand is $1-4\varepsilon$, and the slope in the
 right hand side is $2x/\sqrt{2x^2+2y^2} = \sqrt{2/(1+(y/x)^2} \ge 1$ (because $y \le 1 \le x$). 
 Hence, it suffices to 
 check the equation for $x=1$ (remind that $x \ge 1$): 
 $y^2(1+8\varepsilon) + y(2+10\varepsilon) + 1 + 8\varepsilon + O(\varepsilon^{3/2}) \ge 0$. 
 The discriminant is proportional to $(1 + 5\varepsilon)^2 - (1 + 8\varepsilon)^2 +
 O(\varepsilon^{3/2}) < 0$.
 Hence, $o$ and $p$ satisfy all conditions.
 
 A numerical search to find a maximal $\varepsilon$ satisfying equations \eqref{eq:1}--\eqref{eq:4} 
 shows that   Theorem~\ref{th:upperbound} holds for $\beta = 0.491$.

 \smallskip
 Note that in the construction of $o$ and $p$, we used a $2$-dimensional $u$ both 
 to determine whether a stage is balanced or unbalanced and to be a linear factor in 
 equations \eqref{eq:def_op1} and \eqref{eq:def_op2}.
 $\ovl{o}(v,u)$ and $\ovl{p}(v,u)$ are obtained by using a seperate $2$-dimensional vector 
 $v = [a,b]$ for the first purpose and $u$ for the second purpose. Now $u$ can have any dimension.
 It remains to show the conditions for induction. 
 Remark that all conditions  \eqref{eq:1}-\eqref{eq:4} can be written as 
 $c_1a + c_2b + c_3a_o + c_4b_o \le \ntwo{[a,b]}$ where $c_i \ge 0$ for $i = 1,\dots,4$. 
 Consider condition [\ref{cond:p}] ([\ref{cond:o}] is analogue): for coefficients $c_1\dots c_4$
 corresponding to the balanced or unbalanced case, $\nio{\ovl{p}(v,u)}$ is bounded by
 \[
   \noi{c_1\ufirst + c_2 \ulast + c_3 u_{\mathsmaller{-},o} + c_4 u_{\mathsmaller{+},o}} \le c_1 \noi{\ufirst} + \dots + c_4
   \noi{u_{\mathsmaller{+},o}}\,,
   \]
 the inequality follows from homogeneity and the triangle inequality of norms.
 Let $v = [a,b]$ and assume that 
 $\noi{\ufirst} \le a$, $\noi{u_{\mathsmaller{-},o}} \le a_o$, $\noi{\ulast} \le b$ 
 and $\noi{u_{\mathsmaller{+},o}} \le b_o$,
 then the right hand side is at least $c_1 a + \dots + c_4 b_o \le \ntwo{v}$ by 
 construction of the $n=1$ case. Hence the condition is satisfied.
\end{proof}

\section{Appendix: Online prediction of each $k$-th bit}
  \label{sec:moreMachines}

For all $i,k$, an $i$ modulo $k$ semimeasure is the natural generalization of even semimeasures,
where condition $i.$ holds if $|x| = i \bmod k$ and $ii.$ holds otherwise.
\begin{proposition}\label{prop:onlineKmachines}
  For all $\varepsilon > 0$, $k \ge 2$ and lower semicomputable $(i \bmod k)$-semimeasures $Q_{i
  \bmod k}$ for $i = 1\dots k$, 
  there exist $\delta > 0$, a partial computable $F$, a lower semicomputable semimeasure $P$, 
  and a sequence~$\omega$ such that for all~$n$
  \[
  \left(\prod Q_{i \bmod k}\right)(\omega_{1\dots kn}) \le (1-\delta)^n P(\omega_{1\dots kn})^{k-k\varepsilon}
  \]
  and $F(\omega_{1\dots kn}, \omega_{kn+k}) = \omega_{1\dots kn + k}$.
\end{proposition}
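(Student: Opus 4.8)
The plan is to replay the proof of Proposition~\ref{prop:onlineDouble} with blocks of $k$ bits in place of pairs. Relabel the semimeasures so that, for $j=1,\dots,k$, the semimeasure $Q_j$ is the one that splits on the $j$-th bit of each length-$k$ block and passes on all other bits. I would build $\omega$ block by block together with thresholds $t_1^{(n)},\dots,t_k^{(n)}$, starting from $t_j^{(0)}=1$, maintaining the invariant $t_j^{(n)}\ge Q_j(\omega_{1\dots kn})$, and define a lower semicomputable semimeasure $P$ by
\[
  P(\omega_{1\dots kn}) \;=\; (1-\delta)^{-n}\Bigl(\prod\nolimits_{j} t_j^{(n)}\Bigr)^{1/(k-k\varepsilon)} .
\]
Here $\delta>0$ is chosen small with $\delta^{1-\varepsilon}=2k\delta$; set $\beta=(1-2\delta)^{k-k\varepsilon}$, so that Bernoulli's inequality supplies the one numerical fact needed, $1-\delta^{1-\varepsilon}=1-2k\delta\le(1-2\delta)^{k}\le\beta$. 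As in Proposition~\ref{prop:onlineDouble}, the factor $(1-\delta)^{n}$ in the statement is obtained after the harmless final rescaling of $\delta$.

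At block $n$, with $x=\omega_{1\dots kn}$ already fixed, I would dovetail the enumerations from below of all the $Q_j$ and watch the $k$ conditions ``$Q_j(x0^{j})>t_j^{(n)}\delta^{1-\varepsilon}$''. If some condition ever fires, let $j^\ast$ be the first to do so and commit the block $0^{j^\ast-1}1^{k-j^\ast+1}$, setting $t_{j^\ast}^{(n+1)}=t_{j^\ast}^{(n)}\beta$ and $t_j^{(n+1)}=t_j^{(n)}$ for $j\ne j^\ast$; if no condition ever fires, commit the block $0^{k}$ and set $t_j^{(n+1)}=t_j^{(n)}\delta^{1-\varepsilon}$ for every $j$. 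Exactly as in Proposition~\ref{prop:onlineDouble}, $\omega$ is approximated from below by this process (each block is $0^k$ until some $j^\ast$ fires and is then frozen, while later blocks are recomputed), so each $kn$-bit string is visited with at most one threshold vector; let $P$ be $0$ off the approximation and the displayed value on it.

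The remaining checks are routine book-keeping. For the invariant: in the ``otherwise'' case $Q_j(\omega_{1\dots kn+k})=Q_j(x0^{j})\le t_j^{(n)}\delta^{1-\varepsilon}$; in the $j^\ast$-case the machines $j\ne j^\ast$ only pass on the committed block, so $Q_j(\omega_{1\dots kn+k})\le Q_j(x)\le t_j^{(n)}$, whereas $Q_{j^\ast}(\omega_{1\dots kn+k})\le Q_{j^\ast}(x0^{j^\ast-1}1)\le Q_{j^\ast}(x)-Q_{j^\ast}(x0^{j^\ast})\le t_{j^\ast}^{(n)}(1-\delta^{1-\varepsilon})\le t_{j^\ast}^{(n)}\beta$. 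For the semimeasure property: the only children of $x$ carrying positive mass are $x0^{k}$, with $P(x0^{k})=(1-\delta)^{-1}\delta\,P(x)$ since $(\delta^{1-\varepsilon})^{k/(k-k\varepsilon)}=\delta$, and at most one string $x0^{j^\ast-1}1^{k-j^\ast+1}$, with value $(1-\delta)^{-1}(1-2\delta)P(x)$ since $\beta^{1/(k-k\varepsilon)}=1-2\delta$; as $\delta+(1-2\delta)=1-\delta$ these sum to exactly $P(x)$, and $P$ at the empty string is $1$. Finally $\prod_j Q_j(\omega_{1\dots kn})\le\prod_j t_j^{(n)}=\bigl((1-\delta)^{n}P(\omega_{1\dots kn})\bigr)^{k-k\varepsilon}$, which is the asserted bound after rescaling $\delta$.

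It remains to define $F$. Here is the design point, and essentially the only place the argument differs from the case $k=2$: a committed block equals $0^{k}$ precisely when no trigger occurred, and otherwise equals $0^{j^\ast-1}1^{k-j^\ast+1}$, whose last bit is $1$ and whose first $1$ locates $j^\ast$; moreover in the latter case only $Q_{j^\ast}$ loses budget, which is exactly what keeps the semimeasure computation above balanced. Consequently, if $\omega_{kn+k}=0$ then the block is $0^{k}$ and $F$ outputs $\omega_{1\dots kn}0^{k}$ immediately; if $\omega_{kn+k}=1$ then a trigger did occur, so $F$ first reconstructs $t_1^{(n)},\dots,t_k^{(n)}$ from $\omega_{1\dots kn}$ — they form a computable product of factors $\delta^{1-\varepsilon}$, $\beta$ and $1$, one per earlier block, read off from which machine (if any) triggered there — then dovetails the $Q_j$ as in the construction, waits for the first firing $j^\ast$, and outputs $\omega_{1\dots kn}0^{j^\ast-1}1^{k-j^\ast+1}$. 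I expect the main obstacle to be precisely the choice of this block encoding: it must let the single bit $\omega_{kn+k}$ reveal whether a trigger happened, let $F$ afterwards identify the triggering machine, and simultaneously confine the threshold decrement to one coordinate so that the semimeasure sum still closes exactly; once $0^{j^\ast-1}1^{k-j^\ast+1}$ is fixed, the rest is a transcription of the two-machine argument with $2$ replaced by $k$.
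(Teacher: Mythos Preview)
Your proof is correct and follows essentially the same strategy as the paper's, which also replays the Proposition~\ref{prop:onlineDouble} argument block by block with the same choice $\delta^{1-\varepsilon}=2k\delta$ and the same threshold bookkeeping; the only cosmetic difference is that the paper encodes the triggered block as $0^{j^\ast-1}10^{k-j^\ast-1}1$ rather than your $0^{j^\ast-1}1^{k-j^\ast+1}$, and this makes no difference to any of the estimates or to the construction of $F$. One small slip in wording: the machines $Q_j$ with $j\ne j^\ast$ do not literally ``only pass'' on the committed block (each still splits once, at its position~$j$), but your conclusion $Q_j(\omega_{1\dots kn+k})\le Q_j(x)\le t_j^{(n)}$ is nonetheless correct simply because a semimeasure never increases along extensions.
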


\begin{proof}
 The proof is analogues to the proof above.
 First consider the game for the case $n=1$. We choose 
 \[
 \delta = 2^{-\frac{1+\log k}{\varepsilon}}\,
 \]
 hence, $\delta^{1-\varepsilon} = \delta 2^{1+\log k} = 2k\delta$.
 Let $0^k$ be the string containing $k$ zeros.
 We choose $(1-\delta)P(0^k) = \delta$ and wait until $Q_{i \bmod k}(0^k) > \delta^{1-\varepsilon}$
 for some $i = 1,\dots, k$. 
 Let $i$ be the first for which this happens 
 and let $x^i$ be the leftmost string for which the $i$-th and the $k$-th bit is $1$, 
 i.e. $x^i = 0^{i-1}10^{k-i-1}1$.
 Alice's second (and last) move is $(1-\delta)P(x^i) = 1-2\delta$.
 We have that $\left(\prod Q_{i \bmod k}\right)(x^i) \le 1 - \delta^{1-\varepsilon} = 1 - 2k\delta$.
 For Bob to win, he should satisfy $\left(\prod Q_{i \bmod k}\right)(x) > (1-2\delta)^{k-\varepsilon k}$
 and this is lower bounded by $(1-2\delta)^k \ge 1 - 2k\delta$, and hence he can not win.
 In particular, $Q_{i \bmod k}(x^i) \le (1-2\delta)^{k-\varepsilon k}$. 
 %For later use we define 
 %$\beta$ be the right hand side of this inequality.

 Note that for the string $x$ on which Alice wins, we have $x_k = 1$ iff Alice had a second move.
 Knowing that Alice had a second move, we can compute $i$. Thus we can compute $x$ 
 from $k,x_k$. 
 
 For $n \ge 1$ we define $\omega$, $t_{n,i}$ inductively in a similar way as before.
 Let the event $E_{x,i}$ denote whether $Q_{i \bmod k}(x0^k) > \delta^{1-\varepsilon}$, 
 and this was not detected already for the other measures $Q_{j \bmod k}$ with $j \not= i$.
 Let $\omega_{nk+1\dots nk+k} = x^i$ if an event $E_{\omega_{1\dots kn},i}$ happens, 
 and otherwise, let $\omega_{nk+1\dots nk+k} = 0^k$. 
 Let $t_{n+1,i} = (1-2\delta)^{k-2\varepsilon k} t_{n,i}$ 
 if an event $E_{\omega_{1\dots kn},i}$ happens, 
 and $t_{n+1,j} = t_{n,j}$ for all $j \not= i$, and 
 otherwise let $t_{n+1,i} = \delta^{1-\varepsilon}t_{n,i}$.
 By induction it follows that $t_{n,i} \ge Q_{i \bmod k}(\omega_{1\dots kn})$ for all $i$ and $n$.
 For $x$ of length $kn$,  we define $t_{x,i}$ to be $t_{n,i}$ if at some point $x$ is 
 an initial segment of a candidate $\omega$ in an approximation of $\omega$ 
 as considered above. For such $x$ we define $P$ by 
 \[
 (1-\delta)^n P(x) =  \left( \prod_{i = 1\dots k} t_{x,i} \right)^{1/(k-k\varepsilon)}\;,
 \]
 Now  Proposition~\ref{prop:onlineKmachines}  follows after rescaling $\delta$.
\end{proof}

\section{Appendix: Maximal linear asymmetry}
\label{sec:2_3th}

\begin{proposition}
  There exist a sequence $\omega$, a lower semicomputable semimeasure~$P$ and 
  odd and even online lower semicomputable semimeasures 
  $\podd$ and $\pev$ exist such that for all $n$ 
  \[
    (3/2)^n\;(\modd\mev)(\omega_{1\dots 2n}) \le P(\omega_{1\dots 2n}) 
    = (\podd\pev)(\omega_2\omega_1\dots\omega_{2n}\omega_{2n-1})\,.
    \]
\end{proposition}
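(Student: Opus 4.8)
The plan is to run the proof of Proposition~\ref{prop:3_4th} with two modifications: sharpen the loss factor from $3/4$ to $2/3$, and replace the partial computable function $F$ by the weaker assertion that the semimeasure $P$, read on the flipped string, factors as a product $\podd\pev$ of a lower semicomputable odd online semimeasure and a lower semicomputable even one. The constant $2/3$ is the value of the $n=1$ game for Proposition~\ref{prop:3_4th} when the budget $3/4$ is replaced by $2/3$ (cf.\ the remark in Appendix~\ref{sec:upperbound}: Bob can win that game only when the budget is strictly below $2/3$). So the first task is to design and verify a winning strategy for Alice with budget $2/3$. The strategy used for Proposition~\ref{prop:3_4th} --- open a single leaf at $1/4$ and then irrevocably dump $1/2$ at a safe leaf as soon as $\qodd(\omega_{1\dots2n}0)$ or $\qev(\omega_{1\dots2n}00)$ exceeds $1/2$ --- loses $3/4$ precisely because its reaction is one large commitment that may afterwards be partly wasted. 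The improved strategy should therefore stage its reaction: open at $\omega_{1\dots2n}00$ with suitably chosen thresholds $(\lambda o_n,\mu e_n)$, wait to see whether $\qodd(\omega_{1\dots2n}0)$ or $\qev(\omega_{1\dots2n}00)$ crosses its threshold first, then continue into a sub-game on the complementary $\qodd$- or $\qev$-branch in which the observed inequality already bounds one of the two relevant thresholds, and --- on the $\qodd$-branch --- postpone fixing the last bit until $\qev$ reveals which of the two length-$(2n+2)$ children is light. Walking through the resulting case tree, the costly branch being the one where $\qev$ crosses first and a later crossing of $\qodd$ forces a revision, one picks the constants so that the two-bit step loses only a factor $2/3$.

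Given such a two-bit strategy, the inductive construction of $\omega$ is carried out exactly as in Proposition~\ref{prop:3_4th}: build $\omega$ together with thresholds $o_n\ge\modd(\omega_{1\dots2n})$, $e_n\ge\mev(\omega_{1\dots2n})$; attach $(o_x,e_x)$ to every prefix $x$ that is ever an initial segment of a candidate $\omega$; set $P(x)=(3/2)^{|x|/2}o_xe_x$; and check that $P$ is a lower semicomputable semimeasure using $\sum\{o_{xbb'}e_{xbb'}:b,b'\in\{0,1\}\}\le\tfrac23 o_xe_x$. The first inequality of the Proposition is then immediate: $(\modd\mev)(\omega_{1\dots2n})\le o_ne_n$, hence $(3/2)^n(\modd\mev)(\omega_{1\dots2n})\le(3/2)^n o_ne_n=P(\omega_{1\dots2n})$.

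For the factorisation, I would perform the decomposition of Lemma~\ref{lem:computableDecomp} on the flipped tree, monotonically and in tandem with the construction of $P$. After flipping, the odd semimeasure $\podd$ governs the original even positions and the even semimeasure $\pev$ the original odd ones; since each two-bit step contributes to $P$ a product $o_{xc}e_{xc}$ of an odd factor and an even factor, and each revision of the approximation of $\omega$ is triggered either by growth of $\modd$ (touching only the $o$-factor) or of $\mev$ (touching only the $e$-factor), one routes the $o$-factors and $\modd$-driven revisions into $\podd$ and the $e$-factors and $\mev$-driven revisions into $\pev$, in the pattern of Figure~\ref{fig:computableDecomp}, and checks that both are valid lower semicomputable online semimeasures with $\podd\pev$ equal to $P$ on the flipped string.

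The main obstacle is the first step and, inseparably, its concatenation: producing a two-bit winning strategy whose loss is $2/3$ and which, in contrast to the one for Proposition~\ref{prop:3_4th}, composes across levels without extra loss. This is exactly the reason \eqref{eq:double1}--\eqref{eq:double2} have to be weakened here: such an economical strategy may revise $\omega_{2n+1}$ more than once during the approximation, so $\omega_{2n+1}$ is no longer computable from $\omega_{1\dots2n}$ and $\omega_{2n+2}$; the function $F$ of Proposition~\ref{prop:3_4th} disappears and survives only as the statement that $P$ factors with $\podd$ and $\pev$ genuinely lower semicomputable, rather than one of them being essentially computable --- which is what would have yielded \eqref{eq:double2}.
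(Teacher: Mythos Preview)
You correctly identify the crux --- a two-bit strategy that loses exactly $2/3$ and concatenates cleanly --- and you correctly anticipate that this costs the function $F$. But you do not actually supply that strategy, and the sketch you offer (open a single leaf with separate thresholds $(\lambda o_n,\mu e_n)$, react to whichever of $\modd$ or $\mev$ crosses first, then recurse into a sub-game on the complementary branch) is not the one in the paper, and it is not clear it can be tuned to $2/3$ while remaining concatenable. The paper's approach is structurally different in three ways. First, it opens at \emph{two} leaves simultaneously ($\omega_{1\dots 2n}00$ and $\omega_{1\dots 2n}10$, each with weight $t_n/9$), not one. Second, it tracks a single threshold $t_n$ on the \emph{product} $\modd\mev$ rather than separate thresholds $o_n,e_n$; the trigger is that \emph{both} products $(\modd\mev)(\omega_{1\dots 2n}00)$ and $(\modd\mev)(\omega_{1\dots 2n}10)$ exceed $t_n/9$. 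Third, once both triggers fire, Alice compares $p=\modd(\omega_{1\dots 2n}0)$ with $q=\modd(\omega_{1\dots 2n}1)$ and puts $4t_n/9$ on the leaf $\omega_{1\dots 2n}11$ or $\omega_{1\dots 2n}01$ accordingly; the verification that Bob cannot cover this last leaf is by Cauchy--Schwarz, $(p'+q')(u'+v')\ge(\sqrt{p'u'}+\sqrt{q'v'})^2>(\sqrt{1/9}+\sqrt{4/9})^2=1$. The total mass spent is $1/9+1/9+4/9=2/3$, and concatenation is immediate because each step replaces $t_n$ by either $t_n/9$ or $4t_n/9$.

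Your proposed factorisation is also more complicated than necessary. Because the paper always places the two initial $1/9$-leaves on the $\omega_{2n+2}=0$ side and the final $4/9$-leaf on the $\omega_{2n+2}=1$ side, the marginal of $P$ on the (original) even bit is the fixed ratio $1/3:2/3$ regardless of what happens; after flipping, this makes $\podd$ outright \emph{computable}, and $\pev$ carries all the lower-semicomputable content. So the decomposition is not the dynamic routing of $o$- and $e$-revisions you describe but a static split, as in Figure~\ref{fig:decompStrat}.
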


\begin{proof}
We consider a variant of the game defined before the proof of Proposition \ref{prop:3_4th} 
where $a,b,c,d$ are values of $2P(x)/3$ (rather than $3P(x)/4$).
Alice's winning strategy is to start with $a = c = 1/9$. 
As long as she is in winning position, she passes.
Suppose at some moment this is no longer true, thus $pr > 1/9$ and $qu > 1/9$
(see Figure~\ref{fig:onlineStrat}).
Alice's next (and last) move is $d = 4/9$ if $p \ge q$ and $b = 4/9$ otherwise. 
Note that $a + b + c + d = 6/9$ and Alice does not violate her restriction.
Let $p,q,\dots,v$ be Bob's values at the moment of Alice's last move, and let 
$p', q', \dots, v'$ denote the limits of Bob's values.
Consider the case $p \ge q$, the other case is analogous. 
We show that Bob can not win without violating his restriction, i.e. 
$q'v' > 4/9$ implies $(p'+q')(u'+v') > 1$. Indeed,
\begin{equation}\label{eq:alphaBeta}
  \left( p'+q' \right)\left( u'+v' \right) \ge \left( \sqrt{p'u'} + \sqrt{q'v'} \right)^2 .
\end{equation}
This is Cauchy's inequality $\|\vec{a}\|\cdot\|\vec{b}\| \ge |\langle \vec{a},\vec{b} \rangle |$ 
for $\vec{a} = [\sqrt{p'}, \sqrt{q'}]$ and $\vec{b} = [\sqrt{u'}, \sqrt{v'}]$. 
Because $u' \ge u$ and $p' \ge p \ge q$, the right-hand is at least 
\begin{equation}\label{eq:alphaBeta2}
  \ge \left( \sqrt{qu} + \sqrt{q'v'} \right)^2 > \left( \sqrt{\tfrac{1}{9}} + \sqrt{\tfrac{4}{9}} \right)^2 = 1.
\end{equation}

\begin{figure}
  \centering
  \begin{tikzpicture}[grow=up]

    \coordinate (odd) \treeGenerate;
    \treeLabel{odd}{$p$}{$p$}{$q$}{$q$}{$p$}{$q$}{$1$}

    \coordinate[right=3.3 of odd] (even) \treeGenerate;
    \treeLabel{even}{$r$}{$s$}{$u$}{$v$}{$1$}{$1$}{$1$}

    \node at (5.8,1) {$\Longrightarrow$};

    \coordinate[right=4.6 of even] (root) \treeGenerate;
    \treeLabel{root}{$pr$}{$ps$}{$qu$}{$qv$}{$p$}{$q$}{$1$}

      \path (root-1-2) -- node[midway,rotate=90,shift={(+0.05,0)}] {$>$} 
	+(0,1.2)  node {$\tikzfrac{1}{9}$};
      \path (root-2-2) -- node[midway,rotate=90,shift={(+0.05,0)}] {$>$} 
	+(0,1.2)  node {$\tikzfrac{1}{9}$};
      \path  (root-1-1) -- ++(0,0.7) node  {$q'v'$}
      -- node[midway,rotate=90,shift={(-0.05,0)}] {$<$} 
	++(0,0.8) node {$\tikzfrac{4}{9}$};
  \end{tikzpicture}
  \caption{Alice's winning strategy $P$ for $p \ge q$.
    \label{fig:onlineStrat}
  }
\end{figure}
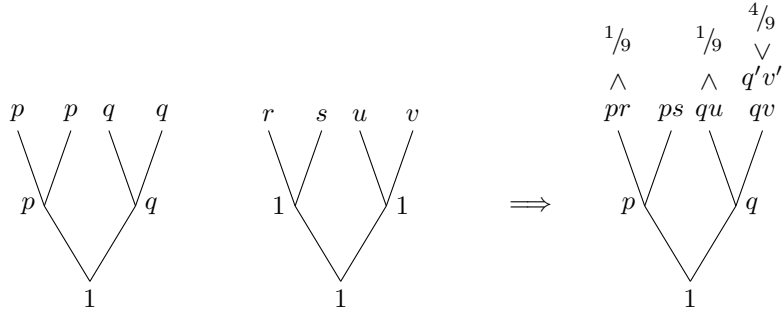

Let $t_0 = 1$.
We construct $\omega_{1\dots2n}$ together with thresholds $t_{n}$ inductively.
For $x$ of length $2n$,
let $T_{00}$ and $T_{10}$ at $x$ be the events that
$(\modd\mev)(x00) > t_{n}/9$ and $(\modd\mev)(x10) > t_{n}/9$.
Let $p_x$ and $q_x$ denote the values $\modd(x1)$ and $\modd(x0)$ at the moment we observe both 
$T_{00}$ and $T_{10}$ at $x$.
\[
\left( \omega_{2n+1}\omega_{2n+2}, t_{n+1} \right) 
= \begin{array}{rlll}
  ( 00, &  t_{n}/9 ) & \text{if $T_{00}$ does not happen at $\omega_{1\dots2n}$} \\
  ( 10, &  t_{n}/9 ) & \text{if $T_{00}$ happens at $\omega_{1\dots2n}$ but not $T_{01}$,} \\
  ( 11, &  4t_{n}/9 ) & \text{if $T_{00}$,$T_{01}$ happen at $\omega_{1\dots2n}$ and $q_x \le p_x$} \\
  ( 01, &  4t_{n}/9 ) & \text{otherwise.}
\end{array}
\]
By induction (and the game above) it follows that 
$t_n \ge (\modd\mev)(\omega_{1\dots2n})$.
We now define a lower semicomputable semimeasure~$P$ such that
\[
  P(\omega_{1\dots2n}) = (3/4)^n o_{n} e_{n}\;.
  \]
First, note that $\omega$ can be approximated as follows: start with $\omega = 00\dots$, 
if $T_{00}$ or $T_{01}$ happen, then bits $\omega_{2n}\omega_{2n+1}$ are changed accordingly,
let all subsequent bits be zero, and run the processes for $n+1$, $n+2$, etc.
For each $n$ and for each $x$ of length $2n$, 
at most one value $t_n$ can be associated to $x$. If this happens, we define
$P(x) = (6/9)^n t_n$ and $P(x) = 0$ otherwise. 
Also note that the two last cases in the definition of $\omega$ 
can not happen simultaneously, hence
$\sum \{ t_{xbb'}: b,b' \in \{0,1\} \}  \le 6t_x/9$. This implies 
$\sum \{ P(xbb') : b,b' \in \{0,1\} \} \le P(x)$ and $P$ is a semimeasure. 

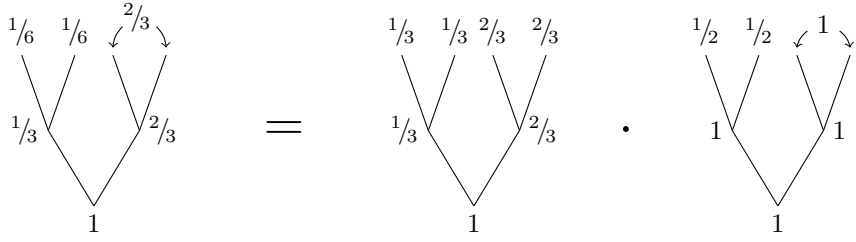
\begin{figure}
  \centering
  \begin{tikzpicture}[grow=up]
    \coordinate (root) \treeGenerate;
    \treeLabel{root}{$\tikzfrac{1}{6}$}{$\tikzfrac{1}{6}$}{}{}{$\tikzfrac{1}{3}$}{$\tikzfrac{2}{3}$}{$1$}
    \node[above=1.5cm of root-1,anchor=center] {$\!\!\tikzfrac{2}{3}\!$}
      edge[->,bend left=20] ($(root-1-1.north) + (0,0.1)$)
      edge[->,bend right=20] ($(root-1-2.north) + (0,0.1)$);

    \node at (2.5,1) {\huge{$=$}};

    \coordinate[right=5 of root] (odd) \treeGenerate;
    \treeLabel{odd}{$\tikzfrac{1}{3}$}{$\tikzfrac{1}{3}$}{$\tikzfrac{2}{3}$}{$\tikzfrac{2}{3}$}{$\tikzfrac{1}{3}$}{$\tikzfrac{2}{3}$}{$1$}

    \node  at ($(odd) + (2,1)$) {\huge{$\cdot$}};

    \coordinate[right=4 of odd] (even) \treeGenerate;
    \treeLabel{even}{$\tikzfrac{1}{2}$}{$\tikzfrac{1}{2}$}{}{}{$1$}{$1$}{$1$}
    \node[above=1.4cm of even-1,anchor=center] {$1$}
      edge[->,bend left=20] ($(even-1-1.north) + (0,0.1)$)
      edge[->,bend right=20] ($(even-1-2.north) + (0,0.1)$);
  \end{tikzpicture}
  \caption{Decomposition of $P'$ defined by Alice's strategy.
    \label{fig:decompStrat}
  }
\end{figure}

It remains to factorize $\tilde{P}(x_2x_1\dots x_{2n}x_{2n-1}) = P(x_1x_2\dots x_{2n-1}x_{2n})$ 
into two online semimeasure $\podd$ and $\pev$. The decomposition 
for $\tilde{P}$ is given in figure \ref{fig:decompStrat} for $x$ of length two 
(the two maximal cases are plotted).
This construction can be iterated, (i.e. we obtain $\podd(xbb')$ and $\pev(xbb')$ 
by multiplying the values of figure \ref{fig:decompStrat} with $P(x)$). 
In fact, $\podd$ is computable and $\pev$ is non-zero on exactly the same places 
as $\tilde{P}$.
\end{proof}

Finally, we remark that this result can be generalized for more machines 
using the generalized H\"older's inequality, 
which is in turn a generalisation of the Cauchy-Schwartz inequality:
for $r,s_1,\dots,s_k$ such that $\sum_{i = 1}^k \frac{1}{s_i} = \frac{1}{r}$,
and for vectors $\vec{u}^1, \dots, \vec{u}^k$ 
\[
\left|\left|  \vec{u}^1  \ldots  \vec{u}^k \right|\right|_r \le
\| \vec{u}^1 \|_{s_1}  \ldots \| \vec{u}^k \|_{s_k},
\]
where $\vec{a} \vec{b}$ denotes entry wise multiplication.

\section{Appendix: Chain rule for online complexity}
\label{sec:chainrule}

\begin{proposition}\label{prop:chainRule}
  $\Cev(xy) = \Cev(x) + \Cev(y|x) + O(\log |x|)$ and similar for odd complexity.
\end{proposition}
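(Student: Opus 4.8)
The plan is to prove the two inequalities separately, following the classical proof of symmetry of information, and to translate the hard direction into a statement about online semimeasures. Throughout I assume $|x|$ is even; when $|x|$ is odd the even positions of $xy$ are the even positions of $x$ together with the \emph{odd} positions of $y$, so one runs the same argument with $\Cev(y|x)$ replaced by $\Codd(y|x)$ and, writing $x$ as its prefix followed by its last bit, absorbs the prediction of that last bit at a cost of $O(\log|x|)$; I suppress this bookkeeping. The odd variant of the whole statement is symmetric.

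For the upper bound $\Cev(xy) \le \Cev(x) + \Cev(y|x) + O(\log|x|)$, I would assemble a single predictor for the even bits of $xy$ out of a shortest predictor $p$ witnessing $\Cev(x)$ and a shortest predictor $q$ witnessing $\Cev(y|x)$: take the string $\sigma\,p\,q$, where $\sigma$ is a self-delimiting encoding of the pair $(|x|,|p|)$, of length $O(\log|x|+\log|p|)=O(\log|x|)$ using $\Cev(x)\le |x|/2+O(1)$. On a prefix $z$ of $xy$ of odd length, the machine recovers $|x|$, $p$ and $q$ from $\sigma$; if the bit to be predicted still lies inside $x$, that is $|z|<|x|$, it outputs $U(p,z)$; otherwise $|z|>|x|$ (since $|x|$ is even and $|z|$ is odd), so $z=x\cat z'$ and, running $q$ with condition $x$ on the prefix $z'$, it outputs the next bit of $y$. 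This computes every even bit of $xy$, whence $\Cev(xy)\le |p|+|q|+O(\log|x|)$.

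For the lower bound I would pass to online semimeasures via the online coding theorem (Theorem~\ref{th:onlineCoding}), whose argument relativizes to give a conditional online coding theorem. Fix a maximal even semimeasure $\mev$ and, uniformly in strings $v$ of even length, a maximal even semimeasure $\mev(\cdot\,|\,v)$ relative to $v$. The key point is that for fixed $v$ of even length the map $w\mapsto \mev(vw)/\mev(v)$ behaves, up to the usual $O(1)$ slack obtained by a dovetailing argument (exactly as in the chain rule for a priori probability, see~\cite{GacsNotes,ZvonkinLevin}), like a lower semicomputable even semimeasure in $w$ — the semimeasure inequalities below the node $v$ are precisely the even-semimeasure inequalities for $w$ because $|v|$ is even — and this is uniform in $v$. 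By maximality this gives $\mev(xy)\le c\cdot\mev(x)\,\mev(y|x)$ for a constant $c$; taking $-\log$ and applying the coding theorem to the three terms yields $\Cev(xy) \ge \Cev(x) + \Cev(y|x) - O(\log|xy|)$.

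The remaining point — improving $O(\log|xy|)$ to $O(\log|x|)$ in the lower bound — is the step I expect to be the main obstacle. The excess comes from the coding-theorem error on the $\mev(y|x)$- and $\mev(xy)$-terms, and removing it is the same phenomenon that already makes symmetry of information nontrivial with only a first-argument logarithmic error; the plan is to adapt the refined argument of~\cite{BauwensAdditivity} to the online setting, enumerating online semimeasures of bounded $xy$-complexity and showing that, apart from a negligible set of prefixes $x$, the relativized semimeasure $\mev(\cdot\,|\,x)$ already recovers the map $w\mapsto\mev(xw)/\mev(x)$ up to a factor $2^{O(\log|x|)}$. A direct counting argument does not suffice here, since $\Cev(z)$ can stay small for arbitrarily long $z$, so the $\mev$-based formulation seems essential.
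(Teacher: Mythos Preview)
Your upper bound by concatenating predictors is fine, and your instinct to pass to online semimeasures for the other direction is right. But the paper's route differs from yours in one essential way: rather than trying to show $\mev(xy)\le c\,\mev(x)\,\mev(y|x)$ directly, it proves the two-sided estimate $\mev(x|\,|x|)\,\mev(y|x,k_x)=\Theta\bigl(\mev(xy|\,|x|)\bigr)$ with $k_x=\lfloor -\log \mev(x|\,|x|)\rfloor$, following Levin's proof of symmetry of information for prefix complexity. The two extra conditions are the working parts. Conditioning on $|x|$ is what lets the constructed semimeasures parse where $x$ ends inside $z=xy$. Conditioning on $k_x$ is what makes the quotient lower semicomputable: your map $w\mapsto \mev(vw)/\mev(v)$ is \emph{not} lower semicomputable uniformly in $v$, since $\mev(v)$ sits in the denominator and only increases; replacing the denominator by the fixed $2^{-k}$ and freezing once $\mev(v)$ overshoots is exactly the Levin device your ``usual dovetailing'' alludes to but does not spell out. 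So your lower-bound argument is not wrong in spirit, but the one non-routine step is left implicit.

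On your last paragraph: the paper does nothing along the lines of~\cite{BauwensAdditivity}. It states the lemma above, proves both inclusions by explicit constructions of conditional even semimeasures (for the $\ge$ direction, $\pev(y|x,k)=\mev(xy|\,|x|)/2^{-k}$ frozen once $\mev(x|\,|x|)>2^{-k}$; for the $\le$ direction, a geometric sum over all admissible $k$), and then simply asserts that the proposition follows via the online coding theorem. The residual coding-theorem error on the $\mev(xy|\,|x|)$ and $\mev(y|x,k_x)$ terms --- which is what worries you --- is not separately discussed; so the sharpening from $O(\log|xy|)$ to $O(\log|x|)$ that you flag as the main obstacle is not something the paper's proof explicitly handles either.
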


The proof is similar to the proof of symmetry of information for prefix
complexity~\cite{ZvonkinLevin}.
A conditional even semimeasure $\pev(x|y)$ is defined in the natural way, i.e. 
a function such that $\pev(.|y)$ is an even semimeasure for all $y$. 
Note that if even complexity was defined over discrete sets 
(rather than $\{0,1\}$),
the conditional variants are simply the cases where the condition is joined with 
the first symbol of the string. Hence the general version of the coding theorem 
in~\cite{onlineComplexity} implies $-\log \mev(x|y) = \Cev(x|y) + O(\log \Cev(x|y))$. 
Therefore, Proposition~\ref{prop:chainRule} follows from

\begin{lemma}\label{lem:chainRule}
 $\mev(x|\,|x|)\mev(y|x, k_x) = \Theta\left( \mev(xy|\,|x|) \right)$
 with $k_x = \lfloor -\log \mev(x| \,|x|) \rfloor$.
\end{lemma}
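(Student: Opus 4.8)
The plan is to prove the two inequalities $\mev(x|\,|x|)\mev(y|x,k_x) = O(\mev(xy|\,|x|))$ and $\mev(xy|\,|x|) = O(\mev(x|\,|x|)\mev(y|x,k_x))$ separately, mimicking the standard semimeasure proof of symmetry of information but tracking the online (odd/even) structure throughout. For the ``$\le$'' direction (the easy one) I would exhibit an even semimeasure in the variable $xy$ (conditioned on $|x|$) that dominates the product on the left. Concretely, define $R(z|\,n)$ by splitting a string $z$ of length $\ge n$ as $z = x'y'$ with $|x'| = n$ and setting $R(x'y'|\,n) = \mev(x'|\,n)\,\mev(y'|x',\lfloor -\log\mev(x'|\,n)\rfloor)$, and $R(z|\,n) = \mev(z|\,n)$ for $|z| < n$. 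One checks that $R$ is lower semicomputable (the value $k_{x'}$ is lower semicomputable as a function producing an integer floor, so it stabilizes in the limit; here I would be careful to use a ``current approximation'' $k$ and take the supremum over consistent approximations, exactly as in the classical argument) and that it satisfies the even-semimeasure inequalities at every node: within the first $n$ levels this is inherited from $\mev(\cdot|\,n)$, and beyond level $n$ it is inherited from $\mev(\cdot|x',k_{x'})$ with the frozen prefix $x'$ acting as a constant. Maximality of $\mev(\cdot|\,n)$ then gives $R = O(\mev(\cdot|\,n))$, which is the claimed inequality.

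For the ``$\ge$'' direction I would go the other way: from $\mev(\cdot|\,n)$ I must manufacture, for each fixed $x$, an even semimeasure in $y$ conditioned on the pair $(x,k_x)$ that dominates $\mev(xy|\,|x|)/\mev(x|\,|x|)$. The natural candidate is $S(y|x,k) = 2^{k}\,\mev(xy|\,|x|)$, clamped so that it is only ``switched on'' once the running approximation of $\mev(x|\,|x|)$ has dropped to $2^{-k}$ or below, i.e. once $k$ is a legitimate approximation of $k_x$ from above. Lower semicomputability is then clear, and the even-semimeasure inequalities for $S(\cdot|x,k)$ follow from those of $\mev(\cdot|\,|x|)$ because prepending the fixed prefix $x$ shifts the parity by $|x|$ bits — which is why the lemma is stated with $\mev$ on both sides and why this works only because even semimeasures are closed under this shift-by-a-fixed-prefix operation (if $|x|$ is even this is immediate; one should note the statement is really about concatenating whole blocks, so parity is preserved). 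The key quantitative point is that $2^{k_x}\mev(x|\,|x|) = \Theta(1)$ by definition of $k_x$ as the floor of $-\log\mev(x|\,|x|)$, so $S(y|x,k_x) = \Theta(\mev(xy|\,|x|)/\mev(x|\,|x|))$, and maximality of $\mev(\cdot|x,k_x)$ finishes it.

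The main obstacle I expect is the bookkeeping around the non-computability of $k_x$: since $\mev(x|\,|x|)$ is only lower semicomputable, $k_x = \lfloor -\log\mev(x|\,|x|)\rfloor$ is only approximable from above, so in both constructions one cannot literally ``use $k_x$'' but must work with a running over-approximation $k$ of it and argue that (i) the constructed semimeasures are monotone in the approximation process and lower semicomputable, and (ii) the bounds, which are stated for the true value $k_x$, hold up to a constant factor because $k$ decreases through only finitely many integer values and the ``mass'' associated with the wrong values of $k$ can be summed geometrically. This is exactly the subtlety in the Zvonkin--Levin / Gács proof of symmetry of information, and the only new ingredient here is checking at each step that the online (parity) constraints are preserved — which, as noted, reduces to the trivial observation that fixing a prefix of fixed length merely relabels which coordinates are ``even''. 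I would also remark that the additive constant $O(\log|x|)$ in Proposition~\ref{prop:chainRule} comes entirely from the $O(\log\Cev(\cdot))$ error in the online coding theorem and from encoding $k_x$ (which costs $O(\log|x|)$ bits since $k_x \le \Cev(x) + O(1) \le |x|/2 + O(\log|x|)$), not from Lemma~\ref{lem:chainRule} itself, which is exact up to a multiplicative $\Theta(1)$.
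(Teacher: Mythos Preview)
Your proposal is correct and follows essentially the same route as the paper: one direction builds a conditional even semimeasure $S(y\mid x,k)=2^{k}\mev(xy\mid |x|)$, frozen once the running approximation to $\mev(x\mid |x|)$ exceeds $2^{-k}$, and the other direction builds an even semimeasure in $z$ (given $|x|$) from the product, handling the non-computable $k_x$ by the geometric sum $\sum_{k:\,2^{-k}\le \mev(z_{1\dots m}\mid m)} 2^{-k-1}\mev(z_{m+1\dots}\mid z_{1\dots m},k)$ --- exactly the ``mass can be summed geometrically'' mechanism you describe at the end. Two small slips worth tightening: in your $S$ the approximation to $\mev(x\mid|x|)$ \emph{rises}, not drops, so the clamping should read ``freeze once it exceeds $2^{-k}$''; and in your $R$ the ``supremum over consistent approximations of $k$'' must literally be that geometric \emph{sum}, since a pointwise supremum over different conditions $k$ need not remain a semimeasure. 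Your parity remark is well taken and is indeed glossed over in the paper.
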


The proof roughly follows the proof of symmetry of information 
for prefix complexity, where it is shown that
$
 m(x) m(y|x,\lfloor -\log m(x) \rfloor) = \Theta(m(x,y))
$.
  
\begin{proof} 
  Let $m = |x|$.
 We show that the left hand side exceeds the right within a constant factor. For this it suffices to construct
 a lower semicomputable even semimeasure $\pev$ such that for all $k$ with  $2^{-k} \ge \mev(x|m)$ we have
 \[
 \pev(y|x,k) = \mev(xy|\,|x|)/2^{-k} \,.
 \]
 Indeed, assume $\mev$ is approximated from below such that at each stage $\mev$ is an even
 semimeasure. 
 At each stage, take the above function as a definition of~$\pev$, 
 as soon as $\mev(x|\,|x|) > 2^{-k}$ do not increase $\pev$ anymore. 
 Clearly $\pev$ is lower semicomputable, the ``freezing'' guarantees 
 that $\pev(\varepsilon|x,k) \le 1$ for all $x$ and $k$ and hence, it is a conditional even semimeasure.

 For the other inequality we construct an even lower semicomputable semimeasure 
 $\pev$ such that if $|z| \ge m$ then
 \[
 \pev(z|m) \ge \tfrac{1}{4} \mev(z_{1\dots m} |m) \mev(z_{m+1\dots |z|} | z_{1\dots m}, k)\,,
 \]
 for $k = \lceil -\log \mev(z_{1\dots m}|m)\rceil$. 
 Our construction of $\pev$ is as follows:
 if $|z| < m$, then $\pev(z|m) = \mev(z|m)$ and otherwise
 \[
 \sum_k \left\{ 2^{-k-1}\mev(z_{m+1\dots |x|} | z_{1\dots m}, k) 
                    : 2^{-k} \le \mev(z_{1\dots m} |m) \right\} \,.
 \]
 Note that for $|z|=m$ we have 
 $\pev(z|m) \le \sum \{2^{-k-1}: 2^{-k} \le\mev(z_{1\dots m} |m)\} \le \mev(z|m)$ 
 hence $\pev$ is an even semimeasure. 
 Moreover, $\pev$ is lower semicomputable and satisfies the condition.
\end{proof}

In~\cite{BauwensPhd},  Lemma~\ref{lem:chainRule} is combined with  P\'eter G\'acs' theorem 
that $\max\{C(C(x)|x): |x|=n\} \ge \log n - O(\log \log n)$ to obtain a more involved proof of
 Theorem~\ref{th:onlineMain} in weaker form, i.e. with a smaller (and machine dependent) linear coefficient.
\end{document}